\tikzset{
	treenode/.style = {shape=rectangle, rounded corners,
		draw, align=center,
		top color=white, bottom color=blue!20},
	root/.style     = {treenode, font=\large, bottom color=red!30},
	env/.style      = {treenode, font=\ttfamily\normalsize},
	dummy/.style    = {circle,draw},
	ancester/.style = {shape=rectangle, rounded corners,
		draw, align=center,
		top color=white, bottom color=red!40}
}
\pgfplotsset{compat=newest}
\newtheorem{theorem}{Theorem}
\newtheorem{proposition}{Proposition}
\newtheorem{corollary}{Corollary}
\newtheorem{lemma}{Lemma}
\theoremstyle{definition}
\theoremstyle{remark}
\newtheorem{remark}{Remark}
\def\E{\mathbb{E}}
\def\P{\mathbb{P}}
\def\R{\mathbb{R}}
\def\d{\partial}
\def\Z{\mathbb{Z}}
\def\cF{{\cal F}}
\def\cG{{\cal G}}
\def\cB{{\cal B}}
\newcommand{\dd}{\mathrm{d}}
\newcommand{\Sbb}{\mathbb{S}}
\newcommand{\M}{{\mathcal{M}_P\left(\Sbb\times \mathbb{R}_+\right)}}
\newcommand{\X}{{\mathbb{X}}}
\newcommand{\PJ}{{\mathrm{P}}}
\newcommand{\ind}{{\mathbf{1}}}
\newcommand{\bZ}{{\mathbf{Z}}}
\newcommand{\UC}{{\mathcal{UC}_b^0}}
\newcommand{\uls}[1]{\underline{\smash{#1}}}
\begin{document}

\title{A branching model for intergenerational telomere length dynamics}
\author{Athanasios Benetos$^{1,2}$, Olivier Coudray, Anne G\'egout-Petit$^{3}$, Lionel Len\^otre$^{4,5,6}$\\
     Simon Toupance$^{1}$, 
and Denis Villemonais$^{3,7,8}$}

\footnotetext[1]{Université de Lorraine, Inserm, DCAC, F-54000, Nancy, France}
\footnotetext[2]{Université de Lorraine, CHRU-Nancy, Pôle "Maladies du Vieillissement, Gérontologie et Soins Palliatifs", F-54000, Nancy, France}
\footnotetext[3]{Université de Lorraine, CNRS, Inria, IECL, F-54000 Nancy, France}
\footnotetext[4]{Université de Haute-Alsace, IRIMAS UR 7499, F-68200 Mulhouse, France}
\footnotetext[5]{Université de Haute-Alsace, UMR 7044 Archimède, F-67000 Strasbourg, France}
\footnotetext[6]{Inria, PASTA, F-54000, Nancy, France }
\footnotetext[7]{Institut universitaire de France (IUF)}
\footnotetext[8]{Corresponding author, email: denis.villemonais@univ-lorraine.fr}

\maketitle

\begin{abstract}
We build and study an individual based model of the telomere length's evolution in a population across multiple generations. This model is a continuous time typed branching process, where the type of an individual includes its gamete mean telomere length and its age. We study its Malthusian's behaviour and provide numerical simulations to understand the influence of biologically relevant parameters.
\end{abstract}

\textit{Keywords: } Telomeres dynamics; Population dynamics; Aged structured model; Branching processes; Quasi-stationary distributions.

\textit{MSC2020 Classification: } 60K40; 60J80; 60J85; 60F99.


\section{Introduction}

Telomeres are specialized nucleoproteic structures that form protective caps at each end of eukaryotic chromosomes. They consist of non-coding repetitive nucleotide sequences associated with a family of proteins. These structures maintain genomic integrity through their capacity to prevent end-to-end chromosome fusions and chromosome extremity recognition as DNA breaks. With each cell division, part of the DNA located at telomeres’ end is lost due to incomplete replication, a phenomenon known as the “end replication problem”. Therefore, this leads to progressive telomere shortening in somatic cells, and in the end to critically short telomeres, which triggers replicative senescence, a state in which cells cease to divide~\cite{XuDucEtAl2013}. We refer the reader to : \cite{EntringerPunderEtAl2018} for an account on telomeres and on their length's dynamics with respect to the age of individuals, \cite{WhittemoreVeraEtAl2019} for a study of the relation between telomere length and life span across different species, and~\cite{LaberthonniereMagdinierEtAl2019} for a survey on the effect of telomere length on individuals health. In humans, it is acknowledged that short telomere lengths are determinants in the development of age-related diseases such as atherosclerosis~\cite{BenetosToupanceEtAl2018}. Telomere lengths also have a strong impact on the lifespan of an individual, but the statistical link remains unclear~\cite{GleiGoldmanEtAl2016}. 

Somatic cells are dysfunction is implicated in a large number of diseases that are suspected to arise from genomic instability or senescence, and thus potentially linked with telomere length. These cells show different phases of telomere length shortening~\cite{FrenckBlackburnEtAl1998}.  From embryonic phase up to childhood, the mean telomere length decreases strongly, while it decreases slowly in adulthood. The erosion speed is similar among adults, but the first phase is individual dependent and is influenced by many environmental factors (for instance intrauterine stress exposure~\cite{EntringerEpelEtAl2011}, childhood obesity~\cite{BuxtonWaltersEtAl2011}, exposure to violence during childhood~\cite{ShalevMoffittEtAl2013}). But all these mechanisms reduce a starting length that is inherited from parents, and consequently from previous generations. Understanding the transmission of telomere lengths across generations within a population is therefore essential.

Telomere length is a highly heritable trait \cite{HjelmborgEtAl2015,BroerEtAl2013,HonigEtAl2015} and is partially influenced by genetic factors. The telomere length of a child is strongly related to gametes' telomere lengths of the parents, particularly of the father~\cite{AvivSusser2013, DeMeyerRietzschelEtAl2007,NordfjaellLarefalkEtAl2005}. Telomere lengths dynamics of male gametes are very different from those of somatic cells since they are subject to the activity of telomerase, an enzyme responsible for maintenance of the length of telomeres~\cite{ZverevaShcherbakovaEtAl2010}. It results a tendency of telomere lengths in male gametes to increase with age~\cite{AvivSusser2013}. The birth-rate as a function of age in a population is thus expected to have an influence over the evolution of telomere lengths distribution within a population.  Therefore, knowing that parents in many countries are having children at an older age than half a century ago (see Figure~\ref{fig:birthrate}), one might expect children to have longer telomeres on average. As explained in~\cite{AvivSusser2013}, higher paternal age at conception has well-documented detrimental effects; these could be offset by beneficial effects due to telomere lengthening induced by paternity at later age.  At the same time, the average length of telomeres in the population changes over relatively short time scales. A striking consequence of this fact is the difference in telomere shortening with age measured in longitudinal versus cross-sectional studies~\cite{HolohanDeMeyerEtAl2015} and with potential implications for public health. This blurs the impact of heredity and prompts the development of models to better understand its real effect.

We propose a probabilistic process that models the evolution through generations of the size of a population as well as the average length of the telomeres of its individuals (see e.g.~\cite{BourgeronXuEtAl2015a,LeeKimmel2020,MattarocciBerardiEtAl2021a,OlofssonKimmel1999a} for models of telomere length's dynamic at the microscopic level). Each individual carries the telomere length of its gametes at breeding age, this does not detract from the generality because it would be possible to obtain the average telomere length of an individual's somatic cells at any age by applying a transfer function obtained by regression to the telomere length. The individuals are asexual; we can imagine that they are a reproductive couple of humans. This is a first model, and we do not want to introduce too much complexity. Age is the second characteristic of an individual since the length of the telomeres of the gametes depends on it. Individuals reproduce during a given period,  in the context of a human couple it emulates the time between the formation of a couple, which is a sort of breeding age, and the menopause of the woman, and at a certain rate depending on age but not on the telomere length of gametes. Individuals also reproduce  independently. Finally, the length of the telomeres at puberty is given by a transition function taking into account age and simulating the action of telomerase on the telomeres of gametes. We will specify its choice later.

\begin{figure}[!htb]
    \center{\includegraphics[height=9cm]{./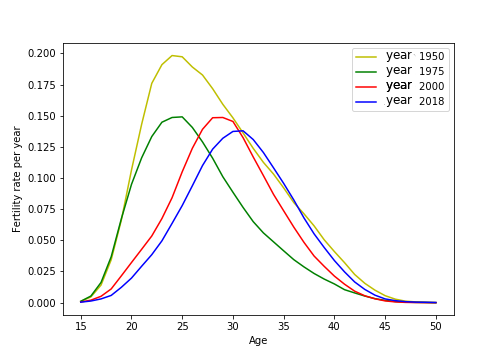}}
    \caption{\label{fig:birthrate} Birth-rate as a function of age in 1950, 1975, 2000 and 2018 (INSEE)}
\end{figure}

Mathematically speaking, the model is a Crump Mode Jagers typed branching process with age (abreviated CMJ) which is a generalization of the age dependent branching processes described in~\cite[Chapter~VI]{Harris1963}. We refer the reader to~\cite{JagersNerman1984} (age structured branching processes without types) and~\cite{Jagers1989} (age structured branching processes with types) for an introduction; \cite{Olofsson2009} provides refined convergence results and~\cite{Bertoin2017} a construction in a growth fragmentation setting. In these branching models, one considers the genealogy of the population, each individual being marked by its type (say $s_i$ for individual $i$) and its birth time (say $t_i$), as represented in Figure~\ref{fig:exCMJ1}. The age of an individual is denoted by $a\in[0,+\infty)$ and evolves linearly in time; the mean telomere length of its gametes, abbreviated by GTL, is designated by $l\in[0,+\infty]$. The breeding age $a_p>0$ is assumed to be fixed across the population. The birth rate is a function of age $b:[0,+\infty)\mapsto[0,+\infty)$ satisfying $b(a)=0$ for $a<a_p$ (see Figure~\ref{fig:birthrate}). With these notations, the dynamics describe right above reads as follows. Each alive individual in the population gives birth to one new individual at random times, independently from each others and from their GTL at puberty at rate $b(a)$ at age $a$. The GTL of an individual at age $a\geq a_p$ grows linearly with time, with a fixed slope $\alpha>0$, so that it is given by $(a-a_p)\alpha$. When a newborn appears in the population, its initial age is $0$ and its GTL at puberty parameter is chosen randomly, depending on the GTL of its parent at the time of birth, denoted by $GTL_b$. It follows a truncated Gaussian distribution with mean $GTL_b-\mu$, $\mu>0$ being the mean erosion of telomeres during the pregnancy/childhood phase, and variance $\sigma^2>0$; truncation occurs in the interval $[l_{min},l_{max}]$, where $l_{min}>0$ and $l_{max}>0$ are respectively the minimal and maximal length of any individual. Finally, we suppose that each individual dies at a same age $a_d>a_p$ (this last assumption could be weakened, at the expense of additional technicalities).
\begin{figure}[h]
    \center{\includegraphics[height=5cm]{./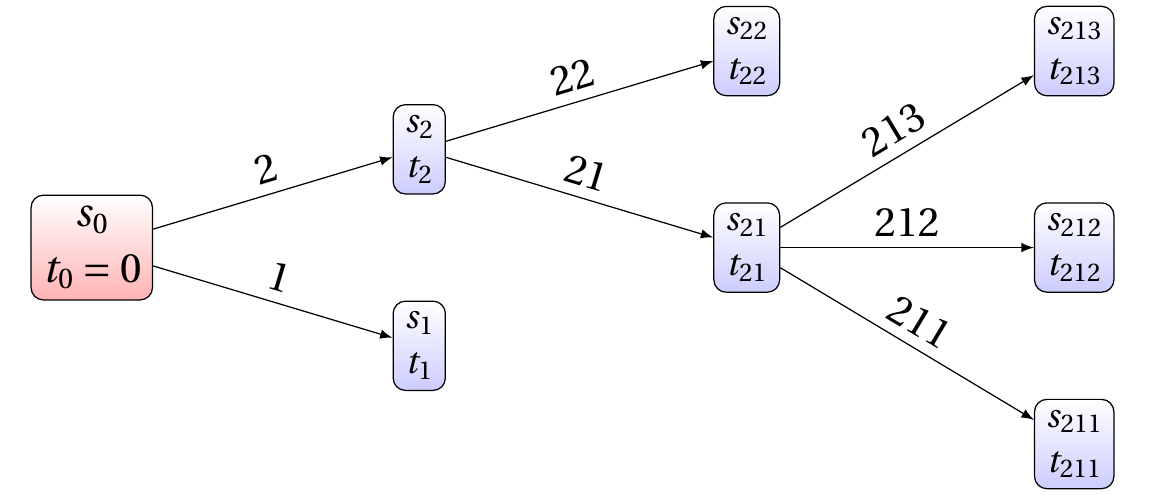}}
    \caption{\label{fig:exCMJ1} Example of a CMJ branching process.}
\end{figure}
Of course, one can represent the process by unfolding the genealogy along the time dimension, as done in Figure~\ref{fig:exCMJ2}. 
\begin{figure}[h]
      \center{\includegraphics[height=5cm]{./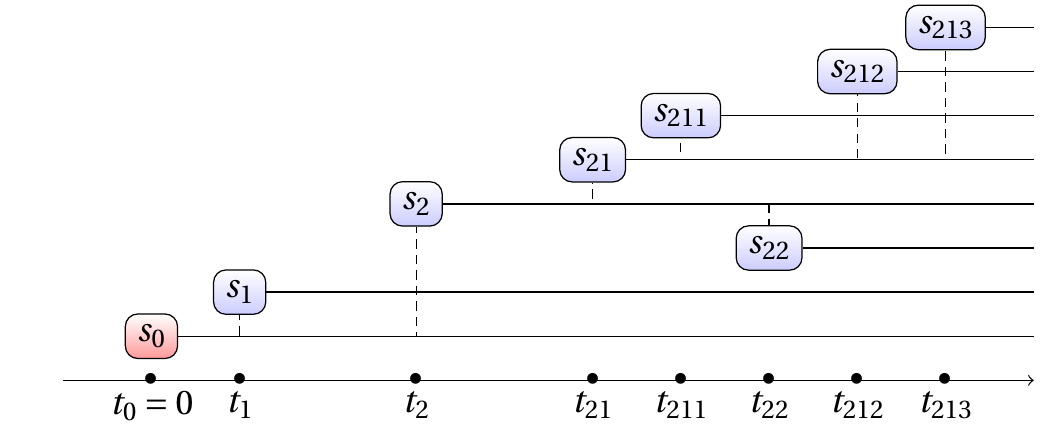}}
    \caption{\label{fig:exCMJ2} Same CMJ process but unfolded along a time axis.}
\end{figure}

 We provide theoretical results and illustrative numerical experimentations. 
 The first main theoretical result is a many-to-one formula for CMJ branching processes, where we show that the potential kernel of those models can be represented via two different Feynman-Kac formulas whose equivalence relies on a simple and classical, yet powerful, argument : their associated semi-groups have the same infinitesimal generators~\cite{Pazy1983}. The second relevant result concerns some spectral properties of the non-conservative semi-group involved : it converges exponentially fast to a ground state, thanks to some recent results on quasi-stationary distributions~\cite{ChampagnatVillemonais2016b}. At the end, it gives a representation of the Malthusian parameter of the model. These theoretical results   are independent of the choices of the parameters $b$, $\alpha$, $\mu$, $\sigma$, $l_{min}$, $l_{max}$ and $a_d$, and they apply to more general CMJ branching processes.

In numerical simulations, the birth-rate function $b$ is chosen according to Figure~\ref{fig:birthrate}, $\alpha=0.017$ (a value justified in \cite{AvivSusser2013}), $l_{min}=5$ kb and $l_{max}=25$ kbp (the quantity $1$ kbp corresponds to a length of DNA of one thousand base pairs), and $\sigma=0.1$ (this is chosen so that the standard deviation in the whole population is of the same order as in~\cite{DeMeyerRietzschelEtAl2007} for the telomere length distribution in their cohort of male adults), although different measurements may be relevant. Different values of $\mu$ and shifted birth rate functions (corresponding to shifts in the parental reproduction age) are considered, and we observe qualitatively the changes implied by these perturbations on the equilibrium (long time) distribution of telomeres in the population, and on the relations between telomere length distribution, father age at birth and parental birth year.

In Section~\ref{sec:modeldef}, we introduce the CMJ branching process that models the behavior of the GTL in a population. Then we state in Section~\ref{sec:poissonian} a Feynman-Kac representation (via a many-to-one formula) of the potential kernel of our branching process under a Poisson branching time assumption. Another many-to-one formula related to a different Feynman-Kac representation of the potential kernel is stated in Section~\ref{sec:nonpoissonian}. The exponential convergence of these non-conservative semi-groups is given in Section~\ref{sec:Malthus}. Finally, we present numerical simulations to illustrate the effect of changes in $\mu$ and of the right shift of the birth-rate curves on the GTL distribution in a population (see Section~\ref{sec:sim}).

\paragraph{Notations :} $\R_+$   denotes the set of non-negative real numbers, $\mathcal M_P(E)$ the set of finite discrete measures on $E$,  and $\|\cdot\|_{TV}$ the total variation distance between measures. As usual, for a mathematical object $x$ belonging to a set $X$, $\delta_x$ stands for the Dirac mass at $x$.

\section{Definition of the model and many-to-one formulas}

\subsection{Definition of the model}
\label{sec:modeldef}
We define an age-dependent branching process with a type belonging to a Polish space $\Sbb$. Each individual is represented by an atom $\delta_{s,t}$, where $s\in \Sbb$ is the type of the individual
and $t\in\R_+$ is the birth date of the individual. The $n^{th}$ generation is  a finite discrete measure on $\Sbb\times\R_+$,  denoted by $X_n\in\M$. 

\begin{remark}
	In the introduction and in our simulations section,  $\Sbb=[l_{\min},l_{\max}]$. However, it may be desirable to include additional traits in the type space, that may be transmitted from parents to childrens or shared among brotherhood, for instance the social environment, childhood exposition to violence, ethnicity or genetic diseases.
\end{remark}

\medskip

Let $b:\Sbb\times\R_+\to\R_+$ be a measurable, compactly supported and bounded function and $\gamma$ a continuous probability kernel from $\Sbb\times\R_+$ to $\Sbb$. In our model, $b(s_0,a)$ represents the reproduction rate for an individual with type $s_0$ and age $a$, and $\gamma_{s_0,a}(\dd s)$ is the type's law of a child born from a father with type $s_0$ and  age $a$. Said differently, we denote by $\PJ_{s_0}$ the law of a Poisson point process in $\Sbb\times \R_+$, with intensity $b(s_0,a)\gamma_{s_0,a}(\dd s)\,\dd a$, and assume that the progeny's distribution of an individual with type $s_0\in\Sbb$ at time $0$ is given by $\PJ_{s_0}$.

The branching process is constructed recursively, generation after generation.  Let $X_0=\delta_{s_0,0}$ be a fixed punctual measure representing the original state of the population at time $0$, constituted of one individual with type $s_0$ and birth date $0$. Assuming that $X_n=\sum_{i=1}^{\overline{X_n}}\delta_{s_i^n,t_i^n}$, where $\overline{X_n}:=X_n(\Sbb\times\R_+)$ is the number of individuals in generation $n$, we define
\[
X_{n+1}=\sum_{i=1}^{\overline{X_n}} \theta_{t_i^n}\circ \xi_{s_i^n}^{n+1},
\]
where the $\xi_{s_i^n}^{n+1}$, $1\leq i\leq \overline{X_n},i<\infty$, are random independent discrete measures with respective laws $\PJ_{s_i^n}$, and where, for all $(s_1,t_1),\ldots,(s_k,t_k)\in\Sbb$ and all $t\in\R_+$,
\[
	\theta_t\circ\sum_{i=1}^k \delta_{s_i,t_i}:=\sum_{i=1}^k \delta_{s_i,t_i+t}.
\]
Informally, $X_n(A\times B)$ should be interpreted as the number of individuals of the $n^{th}$ generation, with type in $A$  and with birth date in $B$.

We emphasize that, since $b$ has compact support and is bounded, each random measure $X_n$ can be written under the form
\[
X_n=\sum_{i=1}^{\overline{X_n}}\delta_{s_i^n,t_i^n},
\]
where $\overline{X_n}<+\infty$ and $t_1^n<t_2^n<\cdots$. Given $s_0\in\Sbb$, we denote by $\P_{(s_0,t_0)}$ the law of $(X_n)_{n\in\Z_+}$ when $X_0=\delta_{(s_0,t_0)}$ almost surely, and by $\E_{(s_0,t_0)}$ the corresponding expectation.

%
%
%

  Following~\cite[Section~5]{Jagers1989}, we define the \textit{reproduction kernel $\mu$} from $\Sbb\times \R_+$ to $\Sbb\times \R_+$ as
\[
\mu(s,A\times B)=\int_{\M} \xi(A\times B)\,\PJ_s(d\xi),\quad s\in\Sbb,A\in \cB(\Sbb), B\in \cB(\R_+).
\]
Hence,  given an individual with type $s\in\Sbb$ at time $0$, the quantity $\mu(s,A\times B)$ gives the mean number of its children whose type are in $A$ and whose birth date is in $B$. We also define the iterates of $\mu$ as $\mu^0(s,\cdot)=\delta_{(s,0)}$ and, by iteration,
\[
\mu^{n+1}(s,A\times B)=\int_{\Sbb\times \R_+} \mu(r,A\times(B-u))\,\mu^n(s,\dd r\times \dd u).
\]

Since this is not stressed out in~\cite{Jagers1989}, we give a short proposition giving the meaning of $\mu^n$ in terms of the composition $X_n$  of the population at generation $n$: $\mu^n(s,A\times B)$ gives the mean number of individuals of the $n^{th}$ generation whose type is in $A$ and whose birth date is in $B$.

\begin{proposition}
	For all $n\in\Z_+$, all $s_0\in\Sbb$ and all measurable sets $A\subset \Sbb$ and $B\subset \R_+$, we have
	\[
	\E_{(s_0,0)}\left[X_n(A\times B)\right]=\mu^n(s_0,A\times B).
	\]
\end{proposition}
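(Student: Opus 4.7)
The plan is a straightforward induction on $n$, the key issue being to lift the induction hypothesis (which applies to indicators of rectangles $A\times B$) to an integration identity against the kernel $\mu$.

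For the base case $n=0$, we have $X_0=\delta_{(s_0,0)}$ almost surely, so $\E_{(s_0,0)}[X_0(A\times B)]=\delta_{(s_0,0)}(A\times B)=\mu^0(s_0,A\times B)$ by definition. For the induction step, I would denote by $\cF_n$ the $\sigma$-algebra generated by the first $n$ generations. Using the recursive construction, $X_{n+1}(A\times B)=\sum_{i=1}^{\overline{X_n}}\xi_{s_i^n}^{n+1}(A\times(B-t_i^n))$, and conditionally on $\cF_n$ the measures $\xi_{s_i^n}^{n+1}$ are independent with respective laws $\PJ_{s_i^n}$. Taking conditional expectation and using the definition of the reproduction kernel $\mu$ yields
\[
\E_{(s_0,0)}\bigl[X_{n+1}(A\times B)\mid\cF_n\bigr]=\sum_{i=1}^{\overline{X_n}}\mu\bigl(s_i^n,A\times(B-t_i^n)\bigr)=\int_{\Sbb\times\R_+}\mu(r,A\times(B-u))\,X_n(\dd r\times \dd u).
\]

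Taking the unconditional expectation then reduces the claim to
\[
\E_{(s_0,0)}\left[\int_{\Sbb\times\R_+}\mu(r,A\times(B-u))\,X_n(\dd r\times \dd u)\right]=\int_{\Sbb\times\R_+}\mu(r,A\times(B-u))\,\mu^n(s_0,\dd r\times \dd u),
\]
which is precisely the recursive definition of $\mu^{n+1}(s_0,A\times B)$. The main (mild) obstacle is justifying this exchange of expectation and integration against a random measure: I would first verify it for indicators $f(r,u)=\mathbf{1}_{A'\times B'}(r,u)$ using the induction hypothesis, extend to simple non-negative functions by linearity, and then to general non-negative measurable $f$ by monotone convergence (Tonelli), before applying it to $f(r,u)=\mu(r,A\times(B-u))$, which is non-negative and measurable.

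This argument goes through unchanged for any bounded measurable $b$ and continuous kernel $\gamma$, since compact support of $b$ guarantees $\overline{X_n}<+\infty$ almost surely and that the sums converge without integrability issues. The argument is essentially the one sketched in~\cite{Jagers1989}, made explicit here to fix notation for the subsequent many-to-one formulas.
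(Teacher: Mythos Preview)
Your proof is correct and follows essentially the same approach as the paper's: induction on $n$, conditioning on the first $n$ generations, applying the definition of $\mu$, and then taking expectations together with the induction hypothesis to recover the recursive definition of $\mu^{n+1}$. The only difference is that you are more explicit about the monotone-class/Tonelli justification for passing from $\E[X_n(\cdot)]=\mu^n(s_0,\cdot)$ on rectangles to $\E[\int f\,\dd X_n]=\int f\,\dd\mu^n$ for the non-negative measurable integrand $f(r,u)=\mu(r,A\times(B-u))$, a step the paper leaves implicit.
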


\begin{proof}
	We show this result by iteration over $n$. The cas $n=0$ is immediate. Assume now that the property holds true for $n\in\Z_+$. Then, by definition of $X_{n+1}$,
	\begin{align*}
	\E\left(X_{n+1}(A\times B) \mid X_n = \sum_{i=1}^{\overline{X_n}} \delta_{s_i^n, t_i^n}\right)
	&= \sum_{i=1}^{\overline{X_n}} \int_{\M}\theta_{t_i^n}\circ\xi(A\times B) \PJ_{s_i^n}(d\xi).
	\end{align*}
	Taking the expectation and using the induction assumption, we obtain
	\begin{align*}
	\E\left(X_{n+1}(A\times B)\right)&=\int_{\Sbb\times\R_+} \int_{\M}\theta_{u}\circ \xi(A\times B) \PJ_{r}(d\xi) \mu^n(s_0,\dd r\times \dd u)\\
	&=\int_{\Sbb\times\R_+} \int_{\M}\xi(A\times (B-u)) \PJ_{r}(d\xi) \mu^n(s_0,\dd r\times \dd u)\\
	&=\int_{\Sbb\times\R_+}\mu(r,A\times(B-u)) \mu^n(s_0,\dd r\times \dd u)\\
	&=\mu^{n+1}(s,A\times B).
	\end{align*}
\end{proof}

Similarly, for any $\lambda\in\R$, we define as in~\cite{JagersNerman1984} the kernel $\mu_\lambda$ as
\[
\mu_\lambda(r,\dd s\times\dd u)=e^{-\lambda u}\,\mu(r,\dd s\times\dd u),
\]
$\mu_\lambda^0(s,\cdot)=\delta_{(s,0)}$ and, iteratively,
\[
\mu_\lambda^{n+1}(s,A\times B)=\int_{\Sbb\times\R_+} \mu_\lambda(r,A\times(B-u))\,\mu_\lambda^n(s,\dd r\times \dd u).
\]
The proof of the following result is similar to the previous one and is thus left to the reader.

\begin{proposition}
For all $\lambda\in\R$, $n\in\Z_+$, all $s_0\in\Sbb$ and all measurable sets $A\subset \Sbb$ and $B\subset \R_+$, we have
\[
\E_{(s_0,0)}\left[\sum_{i=1}^{\overline{X_n}}\delta_{s_i^n,t_i^n}e^{-\lambda t_i^n}\right]=\mu_\lambda^n(s_0,A\times B).
\]	
\end{proposition}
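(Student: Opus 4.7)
The plan is to mimic exactly the induction on $n$ used for the previous (unweighted) proposition, carrying the exponential weight $e^{-\lambda t_i^n}$ along. Reading the stated identity as an equality of measures on $\Sbb \times \R_+$, the induction hypothesis becomes: for every measurable $A\times B$,
\[
\E_{(s_0,0)}\Bigl[\sum_{i=1}^{\overline{X_n}} e^{-\lambda t_i^n}\ind_{(s_i^n,t_i^n)\in A\times B}\Bigr] = \mu_\lambda^n(s_0,A\times B).
\]
By a standard monotone class argument, this is equivalent to the weighted many-to-one formula $\E_{(s_0,0)}[\sum_i e^{-\lambda t_i^n} f(s_i^n,t_i^n)] = \int f(r,u)\,\mu_\lambda^n(s_0,\dd r\times \dd u)$ for all bounded measurable $f$, which is the form most convenient for the inductive step.

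The base case $n=0$ is immediate since $X_0 = \delta_{(s_0,0)}$ gives $e^{-\lambda\cdot 0}\ind_{(s_0,0)\in A\times B} = \delta_{(s_0,0)}(A\times B) = \mu_\lambda^0(s_0,A\times B)$. For the inductive step, I would condition on $X_n = \sum_i \delta_{(s_i^n,t_i^n)}$, and use that $X_{n+1} = \sum_i \theta_{t_i^n}\circ \xi_{s_i^n}^{n+1}$ with the $\xi_{s_i^n}^{n+1}$ conditionally independent of law $\PJ_{s_i^n}$. Each atom of $X_{n+1}$ coming from branch $i$ has the form $(s, t_i^n + u)$ for some atom $(s,u)$ of $\xi_{s_i^n}^{n+1}$; splitting the exponential as $e^{-\lambda(t_i^n+u)} = e^{-\lambda t_i^n}\,e^{-\lambda u}$ and translating the condition $(s,t_i^n+u)\in A\times B$ into $(s,u)\in A\times(B-t_i^n)$ yields
\[
\E\!\left[\sum_{j=1}^{\overline{X_{n+1}}} e^{-\lambda t_j^{n+1}}\ind_{(s_j^{n+1},t_j^{n+1})\in A\times B}\,\Bigl|\,X_n\right]
= \sum_{i=1}^{\overline{X_n}} e^{-\lambda t_i^n}\int_{\Sbb\times\R_+} e^{-\lambda u}\ind_{(s,u)\in A\times(B-t_i^n)}\,\mu(s_i^n,\dd s\times \dd u),
\]
which by definition of $\mu_\lambda$ equals $\sum_i e^{-\lambda t_i^n}\,\mu_\lambda(s_i^n,A\times(B-t_i^n))$.

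Taking the outer expectation and applying the induction hypothesis with the test function $f(r,u) = \mu_\lambda(r,A\times(B-u))$ (which is measurable and bounded since $b$ is bounded with compact support, so $\mu$ has finite total mass) gives
\[
\E_{(s_0,0)}\Bigl[\sum_{i=1}^{\overline{X_n}} e^{-\lambda t_i^n}\mu_\lambda(s_i^n,A\times(B-t_i^n))\Bigr] = \int_{\Sbb\times\R_+} \mu_\lambda(r,A\times(B-u))\,\mu_\lambda^n(s_0,\dd r\times \dd u),
\]
which is precisely $\mu_\lambda^{n+1}(s_0,A\times B)$ by the iterative definition of $\mu_\lambda^{n+1}$. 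This closes the induction. There is no genuine obstacle: the only point requiring care is the bookkeeping of the time shift, namely systematically translating $B$ to $B - t_i^n$ when reducing an atom of the shifted point process $\theta_{t_i^n}\circ \xi_{s_i^n}^{n+1}$ to an atom of $\xi_{s_i^n}^{n+1}$ itself, and correctly splitting the exponential factor so that $\mu_\lambda$ (rather than $\mu$) appears after integration.
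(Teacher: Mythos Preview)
Your proof is correct and follows exactly the approach the paper indicates: the paper explicitly says the proof ``is similar to the previous one and is thus left to the reader,'' and your induction on $n$, conditioning on $X_n$, splitting $e^{-\lambda(t_i^n+u)}=e^{-\lambda t_i^n}e^{-\lambda u}$, and then invoking the recursive definition of $\mu_\lambda^{n+1}$ is precisely that similar argument. The only addition you make beyond the unweighted case---checking that $f(r,u)=\mu_\lambda(r,A\times(B-u))$ is bounded thanks to the compact support of $b$---is a legitimate detail the paper leaves implicit.
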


\subsection{Many-to-one formula for Poissonian reproduction times}

\label{sec:poissonian}

The aim of this section is to provide a first many-to-one formula, which allows to express the potential kernel (which involves expectations over \textit{many} individuals) as a Feynman-Kac type expression (which only involves \textit{one} trajectory).

In order to do so, we consider the piecewise-deterministic Markov process $(Z_t)_{t\in[0,+\infty)}$ with values in $\Sbb\times \R_+$, which evolves according to the flow $((s_0,a),t)\mapsto (s_0,a+t)$ and, at a rate $b(Z_t)$, jumps according to the probability measure $\gamma_{Z_t}(\dd s)\otimes \delta_0(\dd a)$. We refer the reader to~\cite{Davis1984,Davis1993,AzaisBardetEtAl2014} for general aspects of the theory of piecewise-deterministic Markov processes. We denote respectively by $Z^{(s)}_t$ and by $Z^{(a)}_t$ the first and second component of $Z_t$, respectively in $\Sbb$ and $\R_+$. The component $Z^{(a)}_t$ should be interpreted as the age of an individual (since the last jump), so that $t-Z^{(a)}_t$ is the birth date of the individual (that is the last jump time before time $t$). In what follows, the law of $Z$ with initial position $(s,t)\in\Sbb\times \R_+$ is denoted by $\P_{s,t}^Z$ and its associated expectation $\E_{s,t}^Z$.

Following~\cite{Jagers1989}, we consider the potential kernel $\nu$, defined by
\[
\nu(s,A\times B)=\sum_{n=0}^\infty \mu^n(s,A\times B),
\]
for all $s\in\Sbb$ and all measurable subsets $A\subset\Sbb$ and $B\subset\R_+$. The following many-to-one formula is the main result of this section.

\begin{proposition}
	\label{prop:Zmanytoone}
	We have, for all  $t \geq 0$, all $s_0 \in \Sbb$ and all measurable sets $A\subset \Sbb$ and $B\subset \R_+$, 
	\[
	\nu\big(s_0,A\times (B\cap[0,t])\big)=\E^Z_{s_0,0}\left[\ind_{Z^{(s)}_t\in A,\ t-Z^{(a)}_t\in B}\ \exp\left(\int_0^t b\big(Z_u\big)\,\dd u\right)\right].
	\]
\end{proposition}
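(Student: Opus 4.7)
The plan is to expand each side of the identity as an explicit sum indexed by the number of branchings, and to match the two expansions term by term after observing that the Feynman-Kac weight on the right-hand side exactly compensates the no-jump survival exponentials of the piecewise-deterministic process $Z$.

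First, I would unfold the definition of $\nu=\sum_{n\geq 0}\mu^n$. Since $\PJ_s$ is the Poisson point process with intensity $b(s,a)\gamma_{s,a}(\dd s')\,\dd a$, one has $\mu(s,A\times B)=\int_B b(s,r)\gamma_{s,r}(A)\,\dd r$, and a short induction on $n$ yields
\[
\mu^n\big(s_0,A\times(B\cap[0,t])\big)=\int_{\Delta_n(t)}\int_{\Sbb^n}\ind_{s_n\in A}\,\ind_{r_1+\cdots+r_n\in B}\prod_{k=1}^n b(s_{k-1},r_k)\,\gamma_{s_{k-1},r_k}(\dd s_k)\,\dd r_k,
\]
where $\Delta_n(t)=\{r\in\R_+^n:r_1+\cdots+r_n\leq t\}$ and the $n=0$ term equals $\ind_{s_0\in A}\ind_{0\in B}$.

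Second, I would decompose the expectation on the right according to $N_t$, the number of jumps of $Z$ in $[0,t]$. On $\{N_t=n\}$, denote the jump times by $0<T_1<\cdots<T_n\leq t$, the inter-arrival times by $r_k=T_k-T_{k-1}$ (with $T_0=0$), and the post-jump types by $s_k=Z^{(s)}_{T_k}$. Then $Z^{(s)}_t=s_n$ and $t-Z^{(a)}_t=T_n$, and the joint law of these quantities is a product of one factor $b(s_{k-1},r_k)\exp(-\int_0^{r_k}b(s_{k-1},u)\,\dd u)\gamma_{s_{k-1},r_k}(\dd s_k)\,\dd r_k$ per jump, times a trailing survival factor $\exp(-\int_0^{t-T_n}b(s_n,u)\,\dd u)$. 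The key observation is that on $\{N_t=n\}$,
\[
\exp\left(\int_0^t b(Z_u)\,\dd u\right)=\prod_{k=1}^n\exp\left(\int_0^{r_k}b(s_{k-1},u)\,\dd u\right)\cdot\exp\left(\int_0^{t-T_n}b(s_n,u)\,\dd u\right),
\]
so all survival exponentials cancel and the $n$th summand reduces exactly to the iterated integral obtained in the first step. Summing over $n\geq 0$ gives the claimed equality.

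The only delicate point is the bookkeeping in the first step, namely translating the convolutional shift $B\mapsto B-u$ appearing in the recursion $\mu^{n+1}(s,A\times B)=\int \mu(r,A\times(B-u))\,\mu^n(s,\dd r\times\dd u)$ into the additive constraint $r_1+\cdots+r_n\in B$; this is a routine induction. An alternative route, of the kind announced in the introduction, is to check that both sides, viewed as operators on suitable test functions, solve the same forward Kolmogorov equation driven by $\cL^b\varphi(s,a)=\d_a\varphi(s,a)+b(s,a)\int_\Sbb\varphi(s',0)\gamma_{s,a}(\dd s')$ with the same initial condition, and to conclude by uniqueness; the direct expansion above is however more transparent in this setting.
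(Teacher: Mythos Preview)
Your argument is correct and takes a genuinely different route from the paper's. The paper proceeds indirectly: it first introduces the auxiliary exploration process $Y$ of Section~\ref{sec:nonpoissonian}, proves the representation $\nu(s_0,\cdot)=\int \PJ_{s_0}(\dd\xi)\,\E^Y_{s_0,0,\xi}\big[\ind_{(Y^{(s)}_t,\,t-Y^{(a)}_t)\in\cdot}\,2^{N_t}\big]$ by induction on generations (Proposition~\ref{prop:Ymanytoone}), and only then identifies this with the $Z$-Feynman-Kac expression by showing that the two associated semi-groups on $\UC$ are strongly continuous with the same infinitesimal generator (Section~\ref{sec:proof3}). Your approach bypasses $Y$ entirely and matches $\mu^n$ directly with the $\{N_t=n\}$ slice of the Feynman-Kac expectation via explicit path expansion, using the exact cancellation between the inter-jump survival exponentials of the PDMP and the multiplicative weight $\exp\big(\int_0^t b(Z_u)\,\dd u\big)$.

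What each approach buys: your direct computation is shorter and self-contained, but it relies entirely on the Poissonian structure of the reproduction, which is what makes the joint density of $(r_1,\ldots,r_n,s_1,\ldots,s_n)$ on $\{N_t=n\}$ explicit and the cancellation exact. The paper's detour through $Y$ is longer, but Proposition~\ref{prop:Ymanytoone} is established \emph{without} the Poisson assumption (as the paper stresses), so the $Y$-representation of $\nu$ is valid for general reproduction point processes; only the final semigroup identification invokes the Poissonian nature. In short, the paper separates a generation-counting step (general) from a semigroup-identification step (Poisson-specific), whereas you fuse both into a single calculation that works precisely because the model is Poissonian. The ``alternative route'' you sketch at the end is close in spirit to what the paper does, except that the paper compares the $Z$-semigroup not with $\nu$ directly but with the $Y$-weighted semigroup, and concludes via Proposition~\ref{prop:Ymanytoone}.
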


The proof of this proposition is postponed to Section~\ref{sec:proof3}. Our proof's strategy is first to represent $\nu$ as the expectation with respect to a process exploring a random branch of the model (see next section), and second to prove that both representations coincide.

We also define, for all $\lambda\in\R$, $\nu_\lambda=\sum_{n=0}^\infty \mu_\lambda^n$ and obtain the following corollary.
\begin{corollary}
	\label{cor:useful}
	We have, for all  $\lambda\in\R$, all $t \geq 0$ and all $s_0 \in \Sbb$,
	\[
	\nu_\lambda(s_0,\Sbb\times[0,t])=e^{-\lambda t}\E_{s_0,0}^Z\left(e^{\lambda Z^{(a)}_t}\,e^{\int_0^t b(Z_u)\,\dd u}\right)
	\]

\end{corollary}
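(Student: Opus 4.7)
The plan is to reduce the statement to Proposition~\ref{prop:Zmanytoone} by first identifying $\nu_\lambda$ with an exponential tilt of $\nu$ in the birth-date coordinate, and then integrating that tilt against the Feynman--Kac representation of $\nu$ supplied by that proposition.

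First, using the $\mu_\lambda^n$-representation stated in the preceding proposition, summing over $n\in\Z_+$ and interchanging sum and expectation by monotone convergence (legitimate since each summand $\sum_i\ind_{t_i^n\leq t}\,e^{-\lambda t_i^n}$ is non-negative), I would establish the key identity
\[
\nu_\lambda(s_0,\Sbb\times[0,t])=\int_{[0,t]} e^{-\lambda u}\,\nu(s_0,\Sbb\times \dd u).
\]
Finiteness of both sides follows from Proposition~\ref{prop:Zmanytoone} together with the boundedness of $b$ on the finite time window $[0,t]$.

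Next, Proposition~\ref{prop:Zmanytoone} (taken with $A=\Sbb$) expresses the measure $\nu(s_0,\Sbb\times\dd u)$ restricted to $[0,t]$ as the pushforward of the weighted measure $\exp\bigl(\int_0^t b(Z_u)\,\dd u\bigr)\,\dd\P^Z_{s_0,0}$ under the map $\omega\mapsto t-Z^{(a)}_t(\omega)$. This map takes values in $[0,t]$ almost surely, since $Z^{(a)}_0=0$ and $Z^{(a)}$ grows at unit rate between its jumps (at which it is reset to $0$). A routine monotone-class argument extends Proposition~\ref{prop:Zmanytoone} from indicator test functions $\ind_B$ to the bounded measurable test function $u\mapsto e^{-\lambda u}$ on $[0,t]$, yielding
\[
\int_{[0,t]}e^{-\lambda u}\,\nu(s_0,\Sbb\times \dd u)=\E^Z_{s_0,0}\Bigl[e^{-\lambda(t-Z^{(a)}_t)}\,\exp\Bigl(\int_0^t b(Z_u)\,\dd u\Bigr)\Bigr].
\]
Factoring out $e^{-\lambda t}$ from the exponential gives precisely the claimed identity.

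I do not expect any real obstacle here: given Proposition~\ref{prop:Zmanytoone}, the two interchanges involved (monotone convergence to commute the sum over generations with the expectation, and the monotone-class step to pass from indicators to the exponential test function) are both entirely standard, and the substantive content of the corollary is simply the observation that tilting $\nu$ by $e^{-\lambda u}$ in the birth-date variable amounts to inserting $e^{\lambda Z^{(a)}_t}\,e^{-\lambda t}$ into the Feynman--Kac expectation.
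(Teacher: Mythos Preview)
Your proposal is correct and follows precisely the natural derivation the paper implicitly intends: the corollary is stated without proof as an immediate consequence of Proposition~\ref{prop:Zmanytoone}, and your two steps (identifying $\nu_\lambda$ as the exponential tilt of $\nu$ in the birth-date variable, then applying the Feynman--Kac representation with the bounded test function $u\mapsto e^{-\lambda u}$ on $[0,t]$) constitute exactly that derivation. The justifications you give for the interchanges are appropriate and nothing is missing.
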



\subsection{Random exploration of a CMJ branch}
\label{sec:nonpoissonian}

We describe now a continuous time random process $(Y_t)_{t\in[0,+\infty}$ which explores randomly a branch of the generation tree of a CMJ branching process. This process takes values in $\X:=\Sbb\times \R_+\times \M$, the first component corresponding to the type of the current individual, the second component is the age of the current individual, and the third one is the total progeny of the current individual (starting at its birth time). Informally, the process $Y$ starts at the ancestor position (which includes its type, its birth time, and its progeny) at time $0$ and stays idle up to the first reproduction time. Then it either jumps on the position of the newborn (with probability $\nicefrac12$), or it remains at its position (with probability $\nicefrac{1}{2}$). Then it stays at the same position up to the next reproduction time and so on (see Figure~\ref{fig:exCMJ3}).

\begin{figure}[h]
    \center{\includegraphics[height=5cm]{./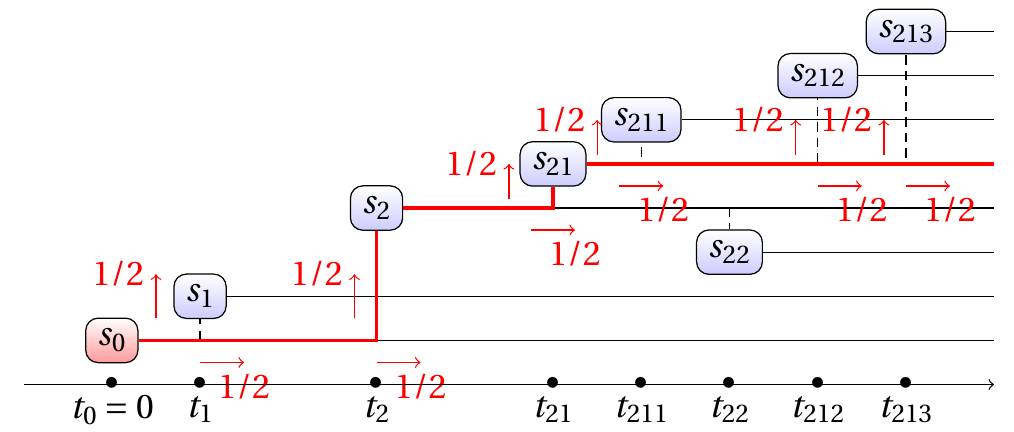}}
	\caption{\label{fig:exCMJ3} Example of a trajectory of the process $Y$: at each reproduction time, the process chooses one of the branches with probability $\nicefrac12$.}
\end{figure}

Let us now define more formally the process $Y$. Let $\tau:\X \rightarrow \R_+$ and $\Lambda:\X\to \Sbb$ be defined, for all $(s,a,\xi) \in \X$, by 
\[ 
\tau(s,a,\xi) = \inf\left\lbrace t > 0,\ \xi(\Sbb\times(a,a+t]) = 1 \right\rbrace
\]
and
\[
\Lambda(s,a,\xi) = l\text{ such that }\xi\{(l,\tau(s,a,\xi))\}=1,
\]
with the convention $\inf\  \emptyset  = +\infty$ and, if $\tau(s,a,\xi)=+\infty$, $\Lambda(s,a,\xi)=s$. Informally, $\tau(s,a,\xi)$ gives the first reproduction time after time $a$, while $\Lambda(s,a,\xi)$ gives the type of the then born child. Now let $\Pi_0$ and $\Pi_1$ be two probability kernels defined by
\begin{align*}
\Pi_0\ :\ &\X\times\mathcal{X} \rightarrow [0,1]\\
& \left((s,a,\xi), A\times B \times C\right) \mapsto \delta_{(s,a+\tau(s,a,\xi),\xi)}(A\times B \times C)
\end{align*}
and
\begin{align*}
\Pi_1\ :\ &\X\times\mathcal{X} \rightarrow [0,1]\\
& \left((s,a,\xi), A\times B \times C\right) \mapsto \delta_{\Lambda(s,a,\xi),0}(A\times B)\ \PJ_{\Lambda(s,a,\xi)}(C),
\end{align*}
where $\cal X$ is the product $\sigma$-field on $\X$ and we recall that $\PJ_s$ is the law of the progeny of an individual with type $s\in \Sbb$.
On the one hand, given $(s,a,\xi)$, $\Pi_0(s,a,\xi)$ is a Dirac measure at $(s,a+\tau(s,a,\xi),\xi)$ and will be used as the jump kernel in the event where $Y$ remains on the father's branch at the reproduction time $a+\tau(s,a,\xi)$. On the other hand, $\Pi_1(s,a,\xi)$ will be used as the jump kernel when $Y$ jumps on the branch of the new-born, since $\Lambda(s,a,\xi)$ is the type of the child, $0$ is its age at the time of birth, and $\PJ_{\Lambda(s,a,\xi)}$ is the law of its progeny.

We first define the included chain of $Y$ at jump times, denoted by $(\sigma_k,W_k)_{k\geq 0}$, where $\sigma_k$ denotes the $k^{th}$ jump time, while $W_k$ denotes the $k^{th}$ position after the jump (note that the size of a jump might be $0$ if the process remains on the father's branch). Let $(\sigma_0,W_0)\in\R_+\times\X$ and define the process iteratively as follows. Given $(\sigma_k,W_k)$, $k\geq 0$,
\begin{itemize}
	\item we set $\sigma_{k+1}=\sigma_k+\tau(W_k)$,
	\item we choose $W_{k+1}$ according to $\Pi_{\varepsilon_{k+1}}(W_k,\cdot)$, where $\varepsilon_{k+1}$ is a Bernoulli random variable with parameter $\nicefrac12$ independent from the rest of the process.
\end{itemize}
If $\sigma_{k+1}=+\infty$, then the Markov chain is stopped. We then formally define $(Y_t)_{t\in[0,+\infty)}$ as
\[
Y_t = \sum_{k = 0}^{+ \infty} \ind_{\sigma_k \leq t < \sigma_{k+1}} \vartheta_{t-\sigma_k}W_k,
\]
where, for all $u\in\R_+$ and $(s,a,\xi)\in\X$, $\vartheta_u (s,a,\xi)=(s,a+u,\xi)$. Note that the first component of $Y_t$ represents the current individual's type, the second component its age at time $t$, and the third component its progeny.

We also define the process $(N_t)_{t\geq 0}$ counting the number of jumps
\[
N_t = \sum_{k=1}^{+\infty} \ind_{\sigma_k \leq t}.
\]
Note that, since $b$ is assumed to be bounded, we have
\[
\sigma_n\xrightarrow[n\to+\infty]{}+\infty\quad\text{almost surely}.
\]

We define the filtration $(\cG_t)_{t\geq 0}$ by
\[
\mathcal{G}_t = \sigma\big((\sigma_k \leq t < \sigma_{k+1})\cap A,\ \text{where }k \geq 0\text{ and }A \in \sigma(W_0,W_1,\ldots,W_k) \big).
\]

\begin{proposition}	
	\label{prop:Ysemigroup}
	The process $(N,Y)$ is a Markov process with respect to the filtration $\cG$. More precisely, for all $f \in L^\infty\left(\mathbb Z_+\times \X \right)$ and all $t,u \geq 0$,
	\[
	\E\left(f(N_{t+u}, Y_{t+u})\ |\ \mathcal{G}_t\right) = Q_u f(N_t, Y_t)\]
	where $Q$ is the semi-group\footnote{by a semi-group, we mean that $Q_{s+t}=Q_sQ_t$ for all $s,t\geq 0$.} associated to the process $(N,Y)$, and is equal to
	\[
	Q_u f(k,y) =  \ind_{u < \tau(Y_t)}\ f(N_t, \vartheta_u Y_t) +  \sum_{j=0}^{+\infty} \int_{\X}\,\Pi(Y_t,\dd w)\, \E^{\sigma,W}_{\tau(Y_t),w}\left(\ind_{\sigma_j\leq t+u<\sigma_{j+1}}\ f(N_t+1+j,\vartheta_{u-\sigma_j}W_j)\right),
	\]
	with $\Pi=\frac{1}{2}\Pi_0+\frac{1}{2}\Pi_1$ and where $\E_{\sigma_0,W_0}^{\sigma,W}$ denotes the expectation with respect to the law of $(\sigma_k,W_k)_k$ starting from $(\sigma_0,W_0)$.
\end{proposition}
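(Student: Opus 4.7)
The plan is to reduce the statement to the Markov property of the embedded chain $(\sigma_k,W_k)_{k\geq 0}$. By its recursive construction, given $(\sigma_0,W_0,\ldots,\sigma_k,W_k)$ the next pair $(\sigma_{k+1},W_{k+1})$ has law $\delta_{\sigma_k+\tau(W_k)}\otimes \Pi(W_k,\cdot)$, where $\Pi=\tfrac12\Pi_0+\tfrac12\Pi_1$, since $\varepsilon_{k+1}$ is an independent Bernoulli random variable with parameter $1/2$. As $N_t$ and $Y_t=\vartheta_{t-\sigma_{N_t}}W_{N_t}$ are deterministic functionals of this Markov chain, the strategy is to decompose $f(N_{t+u},Y_{t+u})$ at the index $N_t$ and to invoke the Markov property there.

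First I would check that $\mathcal{G}_t$ makes $N_t$, $W_0,\ldots,W_{N_t}$ and, via the identity $\sigma_{j+1}=\sigma_j+\tau(W_j)$, the jump times $\sigma_0,\ldots,\sigma_{N_t+1}$ measurable. In particular $Y_t$ and $\tau(Y_t)=\sigma_{N_t+1}-t$ are $\mathcal{G}_t$-measurable. A direct verification from the definitions of $\tau$, $\Lambda$, $\Pi_0$ and $\Pi_1$ gives the key identity $\Pi(W_{N_t},\cdot)=\Pi(Y_t,\cdot)$, because shifting the age coordinate of the state by the elapsed time $t-\sigma_{N_t}$ leaves the next reproduction time and location unchanged. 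Then I would split
\[
f(N_{t+u},Y_{t+u})=\ind_{u<\tau(Y_t)}f(N_t,\vartheta_u Y_t)+\sum_{j\geq 0}\ind_{\sigma_{N_t+1+j}\leq t+u<\sigma_{N_t+2+j}}f(N_t+1+j,\vartheta_{t+u-\sigma_{N_t+1+j}}W_{N_t+1+j}),
\]
the first summand being the $\mathcal{G}_t$-measurable contribution when no jump occurs in $(t,t+u]$.

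For the series, I would apply the Markov property of the embedded chain at index $N_t$: conditionally on $\mathcal{G}_t$, the shifted sequence $(\sigma_{N_t+1+j}-t,W_{N_t+1+j})_{j\geq 0}$ has the same law as a fresh chain $(\sigma_j',W_j')_{j\geq 0}$ started from $(\tau(Y_t),w)$ with $w\sim\Pi(Y_t,\cdot)$. Taking conditional expectations term by term, renaming $\sigma_j'\to\sigma_j$ and $W_j'\to W_j$, and combining with the no-jump term yields exactly the formula in the statement. The semigroup identity $Q_{s+u}=Q_sQ_u$ then follows automatically from the Markov property of $(N,Y)$.

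The main obstacle is the conditioning step: one must justify that the $\mathcal{G}_t$-measurable event $\{\sigma_{N_t+1}>t\}$ does not perturb the conditional law of the continuation $(W_{N_t+1},W_{N_t+2},\ldots)$ given $W_{N_t}$. This holds because $\sigma_{N_t+1}=\sigma_{N_t}+\tau(W_{N_t})$ depends only on $W_{N_t}$ and not on $(W_{N_t+1},W_{N_t+2},\ldots)$, so no survival-type bias is introduced; once this is granted, the remainder is essentially bookkeeping.
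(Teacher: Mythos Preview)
Your proposal is correct and follows essentially the same route as the paper: decompose according to the number of jumps occurring in $(t,t+u]$, identify $\sigma_{N_t+1}-t=\tau(Y_t)$ and $W_{N_t+1}\mid\mathcal{G}_t\sim\Pi(Y_t,\cdot)$, and invoke the Markov property of the embedded chain $(\sigma_k,W_k)_k$. The only cosmetic difference is that the paper works on each event $\{N_t=k\}$ separately and conditions on the deterministic-index $\sigma$-algebra $\mathcal{F}_{k+1}=\sigma((\sigma_0,W_0),\ldots,(\sigma_{k+1},W_{k+1}))$, which makes the ``no survival bias'' point you raise in your last paragraph automatic rather than requiring a separate justification for the random index $N_t$.
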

The proof of Proposition~\ref{prop:Ysemigroup} is detailed in Section~\ref{sec:proof1}. We emphasize that it does not make direct use of the Poissonian nature of the jump mechanism.


%
%

In the construction of $Y$, and more precisely for the construction of $(\sigma_k,W_k)$, we use a sequence of independent Bernoulli random variables  $(\varepsilon_k)_{k \geq 0}$, which encodes the choice of $Y$ to remain on the father's branch ($\varepsilon_k=0$) or to continue on the child's branch ($\varepsilon_k=1$) at each time $\sigma_k$. 
The following result gives a representation of $\mu^n$ in terms of $(N,Y)$. In the following proposition, $Y^{(s)}_t$ and $Y^{(a)}_t$ denote respectively the first and second component of $Y_t$, with value in $\Sbb$ and $\R_+$ respectively. 
In particular, since $Y^{(a)}_t$ should be interpreted as the age of the individual chosen by $Y$ at time $t$, the quantity $t-Y^{(a)}_t$ corresponds to its birth date.

\begin{proposition}
	\label{prop:Ymanytoone}
	For all measurable sets $A\subset \Sbb$ and $B\subset  \R_+$ and all $n \in \mathbb Z_+$, we have, for all  $t \geq 0$ and all $s_0 \in \Sbb$, 
	\[
	\mu^n(s_0,A\times (B\cap[0,t]))=\int_{\M} \,\PJ_{s_0}(\dd \xi)\E^Y_{s_0,0,\xi}\left(\ind_{Y^{(s)}_t\in A,t-Y^{(a)}_t\in B}\ 2^{N_t}\   \ind_{\sum_{i = 1}^{N_t}\varepsilon_i = n}\right),
	\]	
	where $\E^Y_{s_0,0}$ denotes the expectation with respect to te law of $Y$ when the law of $Y_0$ is $\delta_{s_0}\otimes\delta_0\otimes\P^J_{s_0}$.
	In particular,
	\[
	\nu(s_0,A\times B)=\int_{\M} \,\PJ_{s_0}(\dd \xi)\E^Y_{s_0,0,\xi}\left(\ind_{Y^{(s)}_t\in A,t-Y^{(a)}_t\in B}\ 2^{N_t}\right).
	\]
\end{proposition}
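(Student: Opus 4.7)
The plan is to prove the formula for $\mu^n$ by coupling the exploration process $Y$ with a full realization of the CMJ genealogical tree, and then to deduce the $\nu$-formula by summing over $n$ and invoking Tonelli.

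\textbf{Coupling.} I would first build a realization $\mathcal T$ of the full CMJ tree (each individual's progeny drawn independently from $\PJ$ as in Section~\ref{sec:modeldef}), and attach to every reproduction event of every individual in $\mathcal T$ an independent Bernoulli$(\nicefrac12)$ label. Define the process $Y$ to start at the root (which carries progeny $\xi\sim\PJ_{s_0}$) and, each time it meets a reproduction event of its current individual, to jump to the corresponding newborn if the label is $1$ and otherwise to remain on the current branch. Because the progeny of every child in $\mathcal T$ is drawn independently from $\PJ_{\,\cdot}$, the marginal law of $Y$ produced in this way coincides with the one defined in Section~\ref{sec:nonpoissonian}, and the labels encountered by $Y$ play the role of the sequence $(\varepsilon_k)_{k\geq 1}$.

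\textbf{Pathwise identity conditional on $\mathcal T$.} Fix $t\geq 0$. For each $I\in\mathcal T$ with birth date $t_I\leq t$, let $K_I(t)$ be the number of reproduction events that $Y$ would encounter along the unique path from the root to $I$ and while idling at $I$ on $(t_I,t]$. Exactly $\mathrm{gen}(I)$ of these events are ancestor births (forcing label $1$), while the remaining $K_I(t)-\mathrm{gen}(I)$ must carry label $0$. Therefore, conditionally on $\mathcal T$,
\[
\P\bigl(Y_t=I\mid\mathcal T\bigr)=2^{-K_I(t)},\quad N_t=K_I(t)\text{ on }\{Y_t=I\},\quad \sum_{i=1}^{N_t}\varepsilon_i=\mathrm{gen}(I)\text{ on }\{Y_t=I\},
\]
and hence $\E\bigl[\ind_{Y_t=I}\,2^{N_t}\,\ind_{\sum\varepsilon_i=n}\,\bigm|\,\mathcal T\bigr]=\ind_{\mathrm{gen}(I)=n,\,t_I\leq t}$. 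Since $b$ is bounded we have $\sigma_k\to+\infty$ almost surely, so $Y_t$ sits at a unique individual and $\sum_{I\in\mathcal T}\ind_{Y_t=I}=1$. Decomposing $\{Y^{(s)}_t\in A,\,t-Y^{(a)}_t\in B\}$ along the disjoint events $\{Y_t=I\}$ and summing over $I\in\mathcal T$ with $s_I\in A,\ t_I\in B$ yields
\[
\E\bigl[\ind_{Y^{(s)}_t\in A,\,t-Y^{(a)}_t\in B}\,2^{N_t}\,\ind_{\sum\varepsilon_i=n}\,\bigm|\,\mathcal T\bigr]=X_n\bigl(A\times(B\cap[0,t])\bigr).
\]
Taking expectation over $\mathcal T$ gives the first claim via the identity $\E_{(s_0,0)}[X_n(A\times C)]=\mu^n(s_0,A\times C)$ established in Section~\ref{sec:modeldef}. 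Summing over $n\in\Z_+$ and exchanging sum and expectation then yields the $\nu$-formula.

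\textbf{Main obstacle.} The delicate point is justifying that the coupled construction genuinely produces the law of $Y$ defined in Section~\ref{sec:nonpoissonian}: when $Y$ visits a newborn, its new position must have conditional law $\delta_{\Lambda(W)}\otimes\delta_0\otimes\PJ_{\Lambda(W)}$, and this is exactly what the independent sampling of children's progenies inside $\mathcal T$ provides. Once this compatibility is pinned down, the rest of the argument is a combinatorial accounting of reproduction events along a single ancestral line, in which the factor $2^{N_t}$ precisely cancels the exponentially small probability of exploring any particular branch.
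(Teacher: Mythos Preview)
Your argument is correct and constitutes a genuinely different route from the paper's proof. The paper proceeds by induction on $n$: it introduces the successive times $\tau_n=\sigma_{I_n}$ at which $Y$ actually switches to a newborn (i.e.\ $I_{n+1}=\inf\{k>I_n:\varepsilon_k=1\}$), establishes the cases $n=0,1$ directly via the identity $2^k\,\P(I_1=k)=1$, and then uses the Markov property of the embedded chain $(\sigma_k,W_k)$ at time $\tau_1$ to reduce the $(n{+}1)$-th moment to the $n$-th, reproducing the recursive definition of $\mu^{n+1}$ from $\mu^n$ and $\mu$. Your approach instead realizes $Y$ as a spine inside the full tree and performs a one-shot conditional computation: the event $\{Y_t=I\}$ has $\mathcal T$-conditional probability $2^{-K_I(t)}$, which the weight $2^{N_t}$ undoes exactly, turning the expectation into the count $X_n(A\times(B\cap[0,t]))$. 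The coupling argument is more conceptual and explains transparently why the factor $2^{N_t}$ appears; it also avoids the somewhat delicate bookkeeping with $V_n$, $\tau_n$ and the convolution structure of $\mu^{n+1}$. The paper's inductive proof, on the other hand, stays entirely within the intrinsic description of $Y$ via $(\sigma_k,W_k)$ and does not require constructing the full genealogical tree or arguing that the adaptively revealed labels remain i.i.d.\ Bernoulli$(\nicefrac12)$---the point you rightly flag as the main obstacle in your construction, and which does hold because each label $Y$ reads is attached to a reproduction event not previously visited and is independent of the tree.
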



The proof of Proposition~\ref{prop:Ymanytoone} is detailed in Section~\ref{sec:proof2}. As the proof of Proposition~\ref{prop:Ysemigroup}, it does not make direct use of the Poissonian nature of the jump mechanism.

\section{Malthusian properties of the associated semi-group of operators}
\label{sec:Malthus}

In this section, we check that our model defines a \textit{Malthusian process}, as coined by Olofsson~\cite{Olofsson2009}. 
For the sake of simplicity, we assume in this section that $\Sbb$ is compact, and that there exists $b_m<b_M\in\R_+$ such that $b(s,a)$ is positive for all $(s,a)\in\Sbb\times [b_m,b_M]$. We also assume that, for all $s_0,a_0\in\Sbb\times\R_+$,  $\gamma_{s_0,a_0}$ admits a positive density with respect to a reference  measure $\pi$ on $\Sbb$, denoted by $g_{s_0,a_0}(s)\in(0,+\infty)$ at point $s$, and which is continuous with respect to $(s_0,s,a_0)\in\Sbb^2\times\R_+$.

The Malthusian parameter associated  to the process $X$ (see~\cite[Section~5]{Jagers1989}) is given by
\[
	\alpha:=\inf\big\{\lambda\in\R,\ \text{such that}\ \nu_{\lambda}(s,S\times\R_+)<+\infty\ \text{for some}\ s\in\Sbb\big\}.
\]
Defining the semi-group $(R_t)_{t\geq 0}$  by
\[
\delta_{(s_0,a_0)} R_t:=\E_{(s_0,a_0)}\left(\ind_{Z_t\in\cdot}\exp\left(\int_0^t b(Z_u)\,\mathrm{d}u\right)\right),\quad \forall (s_0,a_0)\in\Sbb\times \R_+,
\]
we prove that $\alpha$ is equal to the leading eigenvalue $-\lambda_0$ of the semi-group $(R_t)_{t\geq 0}$ when $\lambda_0$ is negative. 
 We first state some spectral properties of $R$, including the existence of $\lambda_0$, related to the theory of quasi-stationary distributions (we refer the reader to~\cite{ColletMartinezEtAl2013,DoornPollett2013,MeleardVillemonais2012} for general references to quasi-stationary distributions). The proof of the following result is postponed to Section~\ref{sec:proofM1}.

\begin{theorem}
	\label{thm:malthus1}
	Under the above assumptions, there exists a non-negative measurable function $\eta:\Sbb\times\R_+\to\R_+$,  constants $\lambda_0\in\R$, $\lambda_1\in(0,+\infty)$, $C>0$ and a probability measure $\Upsilon$ on $\Sbb\times \R_+$ such that, for all $t\geq 0$ and all $(s_0,a_0)\in\Sbb\times\R_+$,
	\[
		\left\|e^{\lambda_0 t}\delta_{(s_0,a_0)}R_t-\eta(s_0,a_0)\Upsilon\right\|_{TV}\leq Ce^{-\lambda_1 t}.
	\]
	Moreover, if $\lambda_0<0$, the Malthusian parameter $\alpha$ of the branching process $X$ equals $-\lambda_0$.
\end{theorem}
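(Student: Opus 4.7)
The strategy is to reduce the first assertion to the general theory of exponential convergence toward quasi-stationary distributions (QSD) from \cite{ChampagnatVillemonais2016b}. The key observation is that $R_t$ is a non-conservative semi-group weighted by the growth factor $\exp(\int_0^t b(Z_u)\,\dd u)$, but setting $\tilde R_t := e^{-b^\ast t}R_t$ with $b^\ast := \sup b$ (finite by assumption) converts it into a sub-Markov semi-group, namely that of the PDMP $Z$ killed at the non-negative rate $b^\ast - b(\cdot)$. Denoting by $\tau_\partial$ the associated killing time, one has $\delta_{(s,a)}\tilde R_t(\cdot) = \P^Z_{(s,a)}(Z_t \in \cdot,\,\tau_\partial > t)$, and any exponential convergence of the form $\|e^{\tilde\lambda_0 t}\delta_{(s,a)}\tilde R_t - \eta(s,a)\Upsilon\|_{TV}\leq Ce^{-\lambda_1 t}$ for $\tilde R_t$ will translate into the announced statement for $R_t$ by setting $\lambda_0 := \tilde\lambda_0 - b^\ast$.

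The heart of the argument will be to verify the two Doeblin-type sufficient conditions of \cite{ChampagnatVillemonais2016b}: (A1) there exist $t_0,c_1 > 0$ and a probability $\zeta$ on $\Sbb\times\R_+$ such that $\P^Z_{(s,a)}(Z_{t_0}\in\cdot,\,\tau_\partial > t_0) \geq c_1 \zeta(\cdot)$ uniformly in $(s,a)$; and (A2) there exists $c_2 > 0$ such that $\P^Z_\zeta(\tau_\partial > t)\geq c_2\sup_{(s,a)}\P^Z_{(s,a)}(\tau_\partial > t)$ for all $t\geq 0$. Both will be obtained from the structural hypotheses: from any starting state, the deterministic flow of $Z$ brings the age into the active window $[b_m,b_M]$ where the jump rate $b$ admits a positive lower bound (by positivity of $b$ there and compactness of $\Sbb$). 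A jump then occurs within bounded time with probability bounded below, resetting the age to $0$ and drawing a new type with density $g_{s,a}$ uniformly bounded below (by continuity and positivity on the compact $\Sbb^2\times[0,b_M]$). The killing over these bounded horizons only costs a uniform multiplicative constant. Iterating a bounded number of such steps to absorb any initial age outside $[b_m,b_M]$ produces the minorizing $\zeta$, which may be taken concentrated on low ages; (A2) is obtained by similar uniform estimates. This verification is the main obstacle, as the PDMP combines deterministic and jump dynamics and the age coordinate is a priori unbounded.

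Invoking \cite[Theorem 2.1]{ChampagnatVillemonais2016b} then produces $\tilde\lambda_0 > 0$, $\lambda_1 > 0$, $C > 0$, a non-negative function $\eta$ and a probability $\Upsilon$ realising the TV bound for $\tilde R_t$, and multiplying by $e^{b^\ast t}$ gives the first part of the theorem with $\lambda_0 = \tilde\lambda_0 - b^\ast$. For the Malthusian identification when $\lambda_0 < 0$, one would use Corollary~\ref{cor:useful}, which gives $\nu_\lambda(s_0,\Sbb\times[0,t]) = e^{-\lambda t}\E^Z_{s_0,0}\bigl(e^{\lambda Z^{(a)}_t}\,e^{\int_0^t b(Z_u)\,\dd u}\bigr)$, a non-decreasing function of $t$ whose limit is $\nu_\lambda(s_0,\Sbb\times\R_+)$. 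Combining the eigenfunction identity $R_t\eta = e^{-\lambda_0 t}\eta$ with a truncation argument on the test function $(s,a)\mapsto e^{\lambda a}$ (using that $\delta_{(s_0,0)}R_t$ concentrates negligible mass at large ages, by virtue of the killing construction above), one shows that this limit is finite exactly when $\lambda > -\lambda_0$ and infinite when $\lambda < -\lambda_0$, yielding $\alpha = -\lambda_0$.
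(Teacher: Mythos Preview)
Your reduction to a sub-Markov semi-group $\tilde R_t = e^{-b^\ast t}R_t$ is the same first move as the paper's, but your verification of the Doeblin conditions (A1)--(A2) from \cite{ChampagnatVillemonais2016b} on the \emph{full} state space $\Sbb\times\R_+$ cannot work as written. The claim that ``from any starting state, the deterministic flow of $Z$ brings the age into the active window $[b_m,b_M]$'' is false: the flow only increases age, so if $a_0>b_M$ the age never returns to $[b_m,b_M]$, and since $b(s,a)=0$ there the process never jumps either. Concretely, for $a_0>b_M$ one has $\P^Z_{(s_0,a_0)}(Z_{t_0}\in\cdot\mid \tau_\partial>t_0)=\delta_{(s_0,a_0+t_0)}$, a Dirac mass depending on the initial point, which cannot dominate any fixed minorizing measure $\zeta$. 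The set $\Sbb\times[b_M,\infty)$ is absorbing and the killed process is reducible; Assumption~A of \cite{ChampagnatVillemonais2016b} simply fails on $\Sbb\times\R_+$.

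This is exactly why the paper's proof proceeds in two stages. First it verifies Assumption~A for the process $Y^M$ additionally killed at $\tau_M=\inf\{t:Z^{(a)}_t\geq b_M\}$, so that the state space is the compact set $\Sbb\times[0,b_M)$, obtaining a QSD $\Upsilon_M$ and rate $\lambda_0^M\in(0,\|b\|_\infty)$ (Lemma~\ref{lem:lemM1}). Second, it extends the convergence to the full semi-group by decomposing according to whether $\tau_M\leq t$ or not (equation~\eqref{eq:eqM4}) and exploiting that, after $\tau_M$, the killing rate is identically $\|b\|_\infty>\lambda_0^M$; this yields the explicit formula~\eqref{eq:Qproc} for $\Upsilon$ and forces $\eta(s_0,a_0)=0$ for $a_0\geq b_M$. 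Your proposal is missing this entire second stage, and the first stage needs the restriction to $[0,b_M)$ to go through. For the Malthusian identification, your sketch via Corollary~\ref{cor:useful} is on the right track, but the ``truncation argument'' is vague; the paper again splits at $\tau_M$ to control the unbounded weight $e^{\lambda Z^{(a)}_t}$.
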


\begin{remark}
    The proof of Theorem~\ref{thm:malthus1} can be adapted to a more general setting (for instance with $\Sbb$ not compact, or $b(s,\cdot)$ positive on a segment that depends on $s$), at the expense of additional technicalities both in the presentation of the assumptions and in the proofs. 
\end{remark}


For any measurable function $f:\Sbb\times\R_+\to\R$ and all $t\geq 0$, we define the function
\[
	f_t(s_0,a_0)=\ind_{a_0\leq t} f(s_0,t-a_0),
\]
so that, if $s_0$ and $a_0$ are respectively the type and the birth time of one individual, the number $f_t(s_0,a_0)$ is the function $f$ of the type and age of the individual at time $t$. This is similar to the \textit{$\chi$-counted population} introduced in Section~7 of~\cite{Jagers1989}. The next result thus describes the evolution of the expectation of the types and ages distribution across the population and is an immediate corollary of Theorem~\ref{thm:malthus1} and Proposition~\ref{prop:Zmanytoone} (recall that $X_n$ denotes the empirical measure of types and ages in the population at generation $n$). In particular, in this situation, it shows that  the population size evolves exponentially fast with exponential parameter $-\lambda_0$ and that the telomere length distribution across the population converges to a limit which does not depend on the initial distribution of the population ages and telomere lengths.

\begin{corollary}
	\label{cor:main}
	For all bounded measurable function $f:\Sbb\times\R_+\to\R$ and all $t\geq 0$,
	\begin{align*}
		\E_{(s_0,a_0)}\left(\sum_{n\geq 0} X_n(f_t)\right)=\delta_{(s_0,a_0)}R_t f.
	\end{align*}
	In particular, for all $s_0\in\Sbb$, 
	\begin{align*}
		\left|e^{\lambda_0 t} \E_{(s_0,a_0)}\left(\sum_{n\geq 0} X_n(f_t)\right)-\eta(s_0,a_0)\Upsilon(f)\right|\leq C e^{-\lambda_1' t}\|f\|_\infty,
	\end{align*}
	where $\lambda_0,\Upsilon$ are from Theorem~\ref{thm:malthus1} and for some constants $C,\lambda_1'>0$. 	
\end{corollary}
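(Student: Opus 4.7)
The plan is to obtain the first identity by chaining the many-to-one identity for $\mu^n$ (the first proposition of Section~\ref{sec:modeldef}), the definition $\nu=\sum_n\mu^n$, and Proposition~\ref{prop:Zmanytoone}; the second assertion will then follow directly from Theorem~\ref{thm:malthus1} by the standard total-variation/sup-norm duality.

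First I would apply the $\mu^n$-identity to the bounded measurable function $f_t$ and sum over $n\geq 0$. Since $f_t$ is supported in $\Sbb\times[0,t]$, this yields
\[
\E_{(s_0,a_0)}\Bigl[\sum_{n\geq 0}X_n(f_t)\Bigr]\;=\;\int_{\Sbb\times[0,t]} f(s,t-u)\,\nu(s_0,\dd s\times\dd u),
\]
with the exchange of sum and integral justified by Tonelli's theorem applied to $|f|$. Next, Proposition~\ref{prop:Zmanytoone} rewrites the right-hand side as a Feynman--Kac expectation along the auxiliary PDMP~$Z$: since $t-Z^{(a)}_t$ is the birth date of the current individual at time $t$, the correspondence $u\leftrightarrow t-Z^{(a)}_t$ turns the integrand $f(s,t-u)$ into $f\bigl(Z^{(s)}_t,Z^{(a)}_t\bigr)=f(Z_t)$, so that the whole expression coincides with $\delta_{(s_0,a_0)}R_t f$ by definition of the semi-group. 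This establishes the first equality.

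For the second assertion, I would apply the elementary bound $|\rho(f)|\leq \|\rho\|_{TV}\|f\|_\infty$ to the signed measure $\rho=e^{\lambda_0 t}\delta_{(s_0,a_0)}R_t-\eta(s_0,a_0)\Upsilon$, combined with the total-variation estimate of Theorem~\ref{thm:malthus1}; substituting the first identity then gives the claim with $\lambda_1'=\lambda_1$. The only minor subtlety, which I do not expect to be a serious obstacle, is that Proposition~\ref{prop:Zmanytoone} is stated for $Z$ started at $(s_0,0)$, while I invoke it here from general $(s_0,a_0)$; inspection of its proof (postponed to Section~\ref{sec:proof3}) will show that the argument is insensitive to the initial age, since the flow-and-jump dynamics of $Z$ depend only on its current age coordinate.
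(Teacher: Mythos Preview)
Your proposal is correct and matches the paper's intended route: the paper presents this result as ``an immediate corollary of Theorem~\ref{thm:malthus1} and Proposition~\ref{prop:Zmanytoone}'' without writing out a proof, and your argument spells out precisely that chain (Proposition~1 to pass from $X_n$ to $\mu^n$, summation to $\nu$, Proposition~\ref{prop:Zmanytoone} to reach $R_t$, then Theorem~\ref{thm:malthus1} for the rate). Your flagging of the $(s_0,0)$ versus $(s_0,a_0)$ discrepancy is also appropriate; the paper itself is loose on this point, and the time-shift argument you sketch is the right way to handle it.
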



%

We expect that  $e^{-\lambda_0 t} \sum_{n\geq 0} X_n(f_t)$ converges almost surely toward $\eta(s_0,a_0)\Upsilon(f)$ times a non-negative random variable. This type of results is classical in the setting of multi-types branching processes and for branching processes with irreducible reproduction measure (see for instance~\cite{Olofsson2009}). The extension of these results to the situation at hand (where the reproduction measure is allowed to be reducible) is the subject of an ongoing work.

%
%
%

\section{Numerical simulations}

\label{sec:sim}

We analyze numerically the influence of  the attrition parameter $\mu$ and  the birth rate curve on the limit distribution $\Upsilon$, and on the dynamic of the telomere length distribution across the population.
We set: $\Sbb=[l_{\min},l_{\max}]$, where $l_{\min}=5$ kbp and $l_{\max}=25$ kbp; $\alpha=0.017$; 
\[
\gamma_{(s_0,a)}(\dd s)=c_\gamma\,\ind_{s\in[l_{\min},l_{\max}]}\exp\left({-\frac{(s-(s_0+(a-a_p)\alpha-\mu))^2}{2\sigma^2}}\right)\,\dd s,\ \forall (s_0,a)\in\Sbb\times\R_+\to\R_+,
\]
where $\sigma=0.1$, $c_\gamma=1/\int_{l_{\min}}^{l_{\max}} e^{-\frac{(u-\mu)^2}{2\sigma^2}}\,\dd u$ and $\mu=20\alpha$ (different values of $\mu$ in  $\{0,\ldots,25 \alpha\}$ will also be considered); and
\[
b(a)=b_0(a+s_f),\ \forall a\geq 0,
\]
where $b_0$ is chosen according to the demographical empirical curve of year 1960 (see Figure~\ref{fig:birthrate}) and $s_f=0$ is a shift parameter (different values of $s_f$ in $[-10,10]$ will be considered).


\medskip 

We first investigate the influence  of the parameters $\mu$ (see Figure~\ref{fig:Upsilon_fct_mu}) and $s_f$ (see Figure~\ref{fig:Upsilon_fct_fs}) on the long time limiting distribution $\Upsilon$ (see Theorem~\ref{thm:malthus1}), which should be interpreted as the equilibrium telomere length distribution. As expected, a higher attrition before reproduction age (i.e. a higher parameter $\mu$) leads to lower telomere lengths in the population at equilibrium, while a higher parental age at birth (i.e. a higher parameter $s_f$) entails higher telomere lengths in the population at equilibrium. An important feature of the model is that the equilibrium distribution is highly concentrated on the boundaries of the admissible limits. 
Figure~\ref{fig:moyenne_finale_fonction_mu} (resp. Figure~\ref{fig:moyenne_finale_fonction_fs})  displays the influence of $\mu$ (resp.~$s_f$) on the population's mean telomere length at equilibrium. We observe that the influences of $\mu$ and $s_f$ are non-linear. Depending on the parameters value, even a slight  decrease in the attrition parameter $\mu$ or an increase in the parental age at birth $s_f$ can have drastically different effects; the model displays a transition phase phenomenon, with approximate critical values $\mu=12.5$ and $s_f=15$.

\begin{figure}
    \centering
    \includegraphics[width=10cm]{./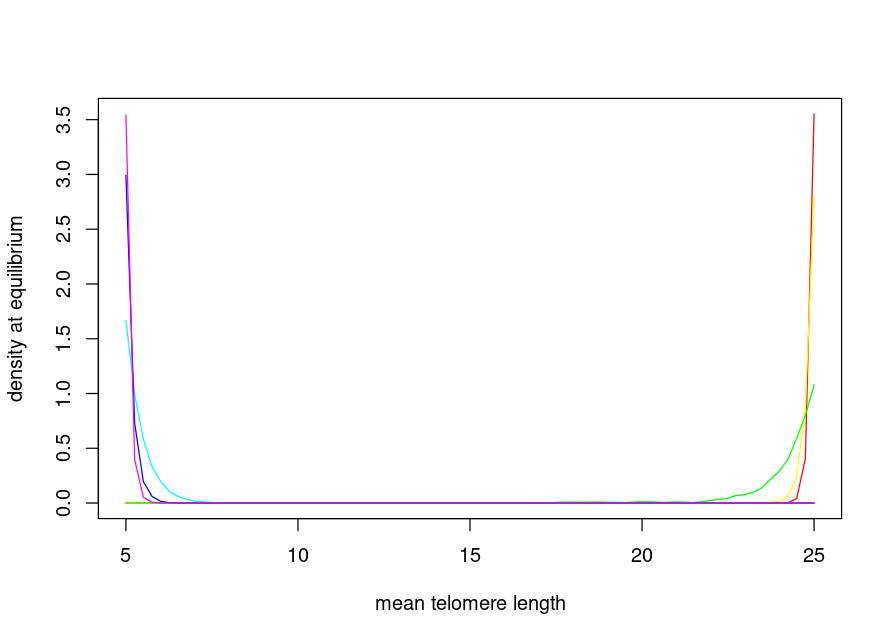}
    \caption{\label{fig:Upsilon_fct_mu} Density of the telomere's length distribution for values of $\mu$ in $\{8\alpha,10\alpha,12\alpha,14\alpha,16\alpha,18\alpha\}$, where $\alpha=0.017$ is fixed ($\mu=8\alpha$ corresponds to the rightmost curve and $\mu=18\alpha$ to the leftmost curve).}
\end{figure}

\begin{figure}
    \centering
    \includegraphics[width=10cm]{./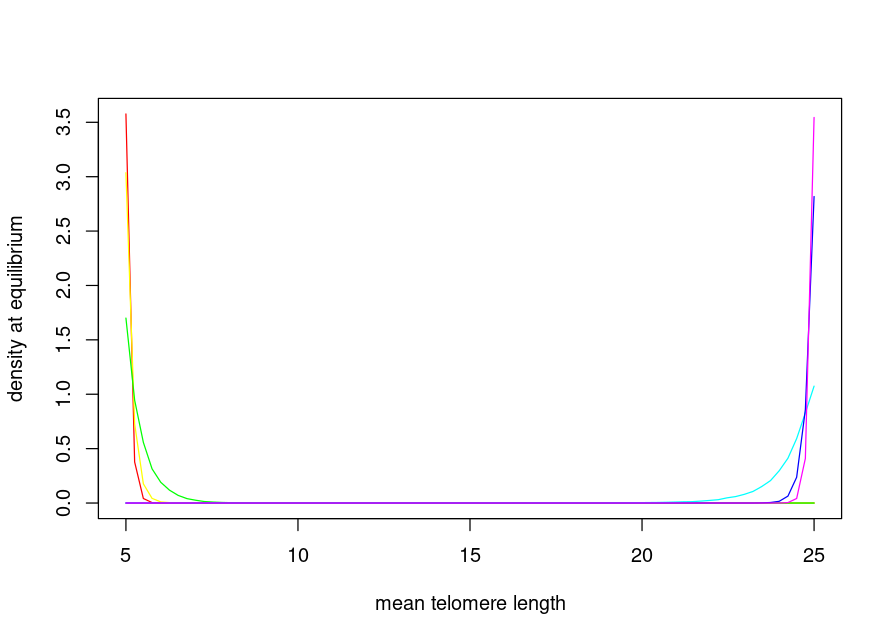}
    \caption{\label{fig:Upsilon_fct_fs}  Density of the telomere's length distribution for values of $s_f$ in $\{10,12,14,16,18,20\}$ ($s_f=10$  corresponds to the leftmost curve and $s_f=20$ to the rightmost curve).}
\end{figure}

\begin{figure}
    \centering
    \includegraphics[width=10cm]{./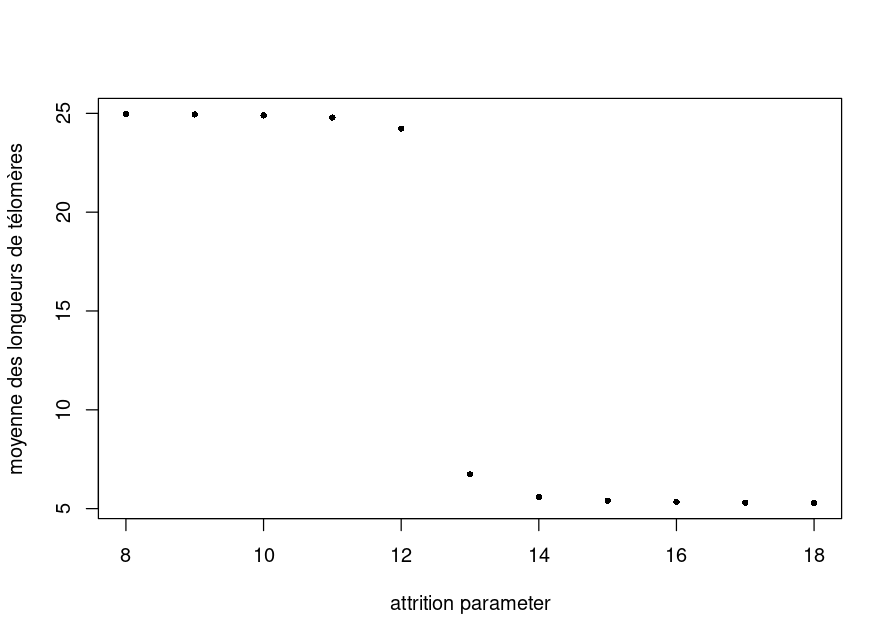}
    \caption{\label{fig:moyenne_finale_fonction_mu} Population mean telomere length at equilibrium, as a function of $\mu/\alpha$, when $\mu$ ranges from $8\times \alpha$ to $18\times \alpha$, and $\alpha = 0.017$ is fixed.}
\end{figure}

\begin{figure}
    \centering
    \includegraphics[width=10cm]{./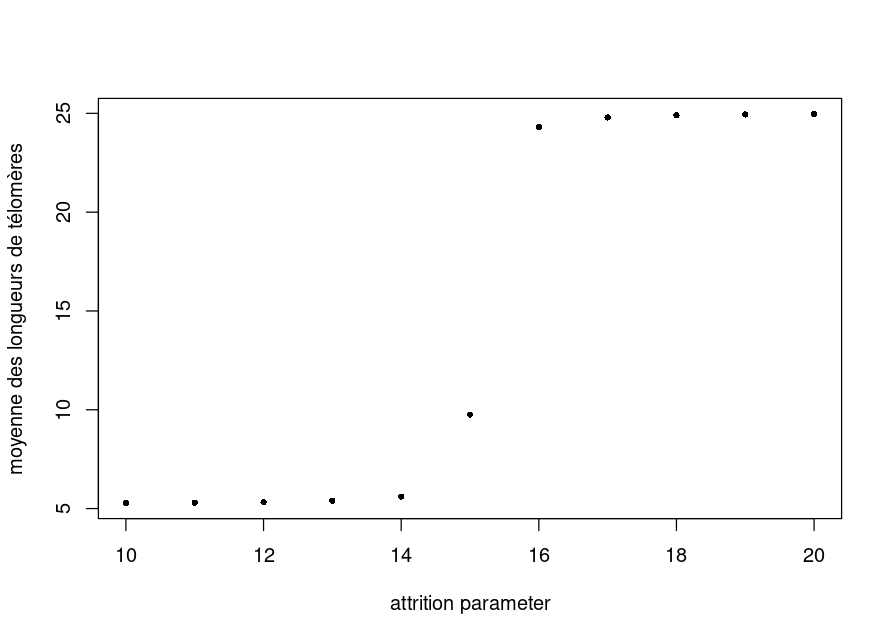}
    \caption{\label{fig:moyenne_finale_fonction_fs} Population mean telomere length at equilibrium, as a function of $s_f$, when $s_f$ ranges from $10$ to $20$.}
\end{figure}

\medskip
To investigate the time evolution of the telomere length in the population, we study the speed of convergence toward the equilibrium distribution $\Upsilon$ (see Theorem~\ref{thm:malthus1}) and the time evolution of the mean telomere length in the population. We display the total variation distance between the population telomere length distribution at time $t$ and the telomere length distribution at equilibrium, given an initial population of individuals with age $0$ and telomere length $18$ kbp. Figure~\ref{fig:Distance_Upsilon_fct_mu} (resp. Figure~\ref{fig:Distance_Upsilon_fct_fs}) displays the evolution of this distance to equilibrium for different values of $\mu$ (resp.~$s_f$). We observe that the speed of convergence to the equilibrium depends on the parameters, and that, in all cases, the  convergence to equilibrium arises after several thousands years. Figure~\ref{fig:Mean_evolution_fct_mu} (resp. Figure~\ref{fig:Mean_evolution_fct_fs}) displays the evolution of the population's mean telomere length over time for several values of $\mu$ (resp.~$s_f$). The mean  also stabilises after several thousand years for most parameter choices, and a drift in the telomere length can be sustained for several thousand years. In such a time frame,  mean attrition before reproduction and parental age at birth is subject to important changes, because of the demographic evolution.  

As a result, our model suggests that the limiting distribution $\Upsilon$ does not materialize in a population where the parameters may change in a time window of less than one thousand year, so that, in empirical measures, the population is not observed at equilibrium, and that 
the drift in the population's mean telomere length can be sustained during very long periods of time. This is coherent with the findings of~\cite{HolohanDeMeyerEtAl2015}.

\begin{figure}[h]
    \centering
    \includegraphics[width=10cm]{./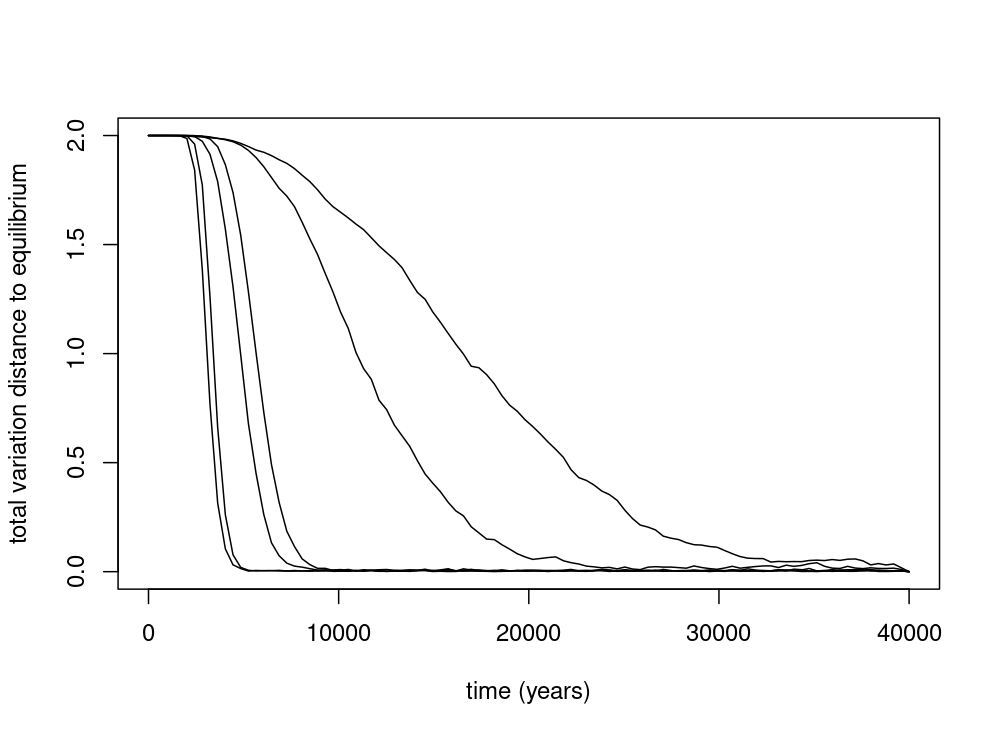}
    \caption{\label{fig:Distance_Upsilon_fct_mu} Total variation distance as a function of time between the population's telomere distribution and telomere length distribution at equilibrium. Values of $\mu$ are in $\{8\alpha,10\alpha,12\alpha,14\alpha,16\alpha,18\alpha\}$, with  $\alpha=0.017$.}
\end{figure}

\begin{figure}[h]
    \centering
    \includegraphics[width=10cm]{./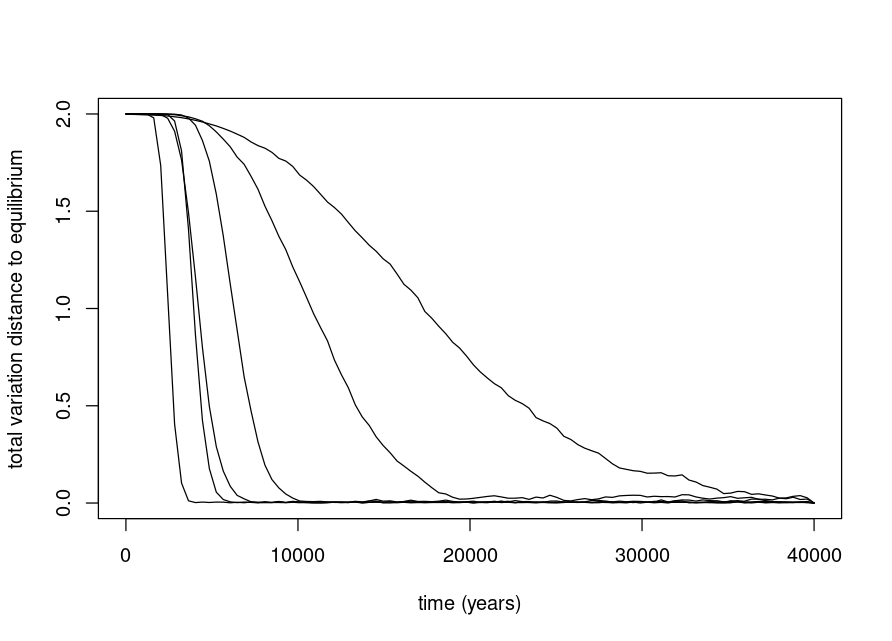}
    \caption{\label{fig:Distance_Upsilon_fct_fs} Total variation distance as function of time between the population's telomere length distribution and the telomere length distribution at equilibrium. Values of $s_f$ are in $\{10,12,14,16,18,20\}$.}
\end{figure}

\begin{figure}[h]
    \centering
    \includegraphics[width=10cm]{./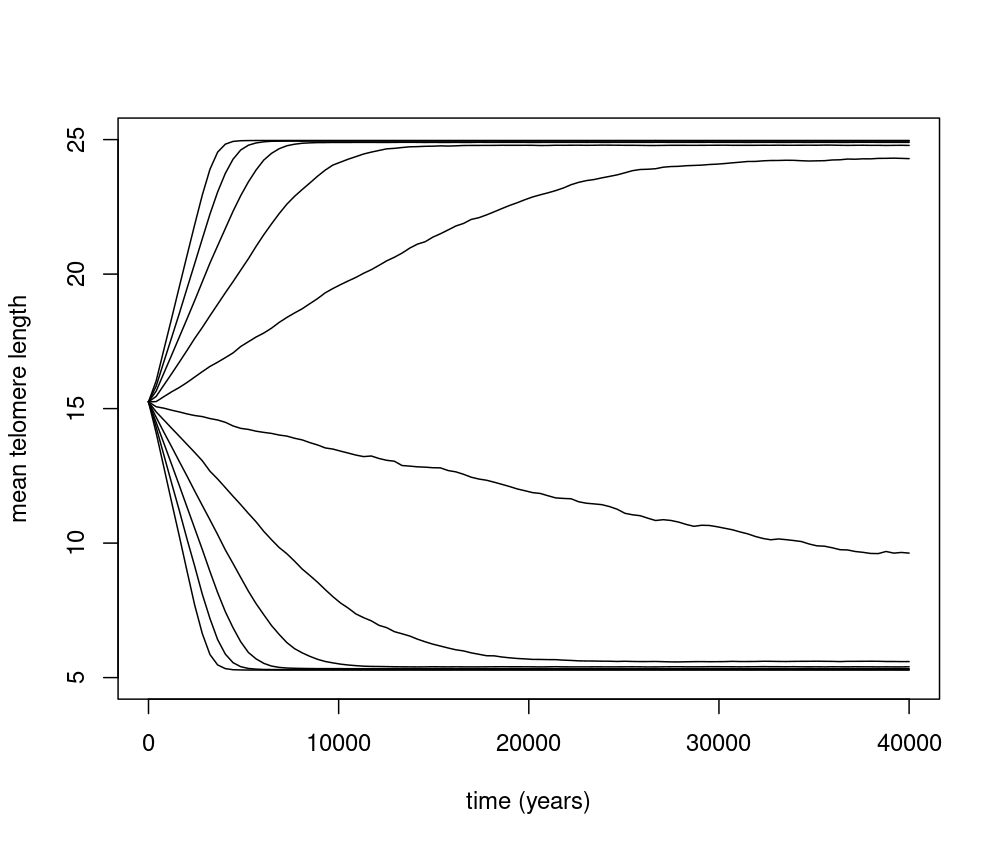}
    \caption{\label{fig:Mean_evolution_fct_mu} Population mean telomere length as a function of time, for different values of $\mu$. The highest trajectory corresponds to $\mu=8\alpha$ and the lowest trajectory to $\mu=18\alpha$, with $\alpha=0.017$.}
\end{figure}

\begin{figure}[h]
    \centering
    \includegraphics[width=10cm]{./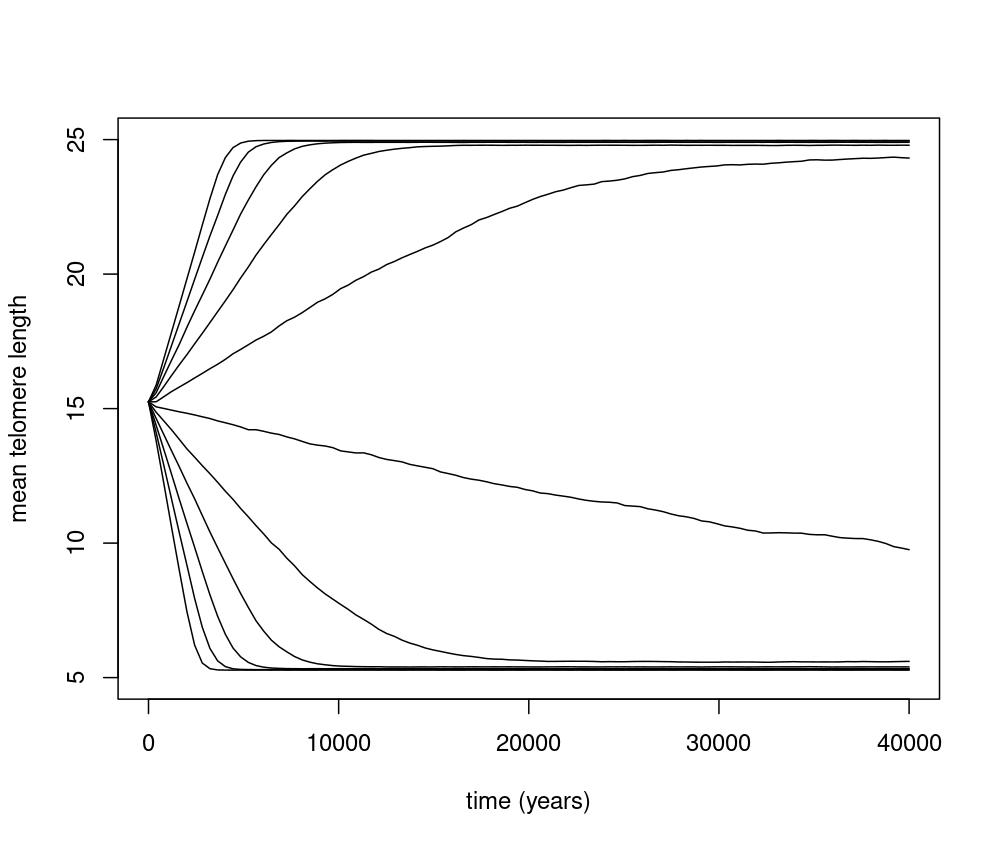}
    \caption{\label{fig:Mean_evolution_fct_fs} Population mean telomere length as a function of time, for different values of $s_f$. The highest trajectory corresponds to $s_f=20$ and the lowest trajectory to $s_f=10$.}
\end{figure}

Finally, we investigate the situation where the parameters $\mu$ or $s_f$ undergo a change at some time point. Namely, we consider the situation where $\mu$ or $s_f$ are constant for $5000$ years and then shift to a new value for $1000$ years. Our main focus is to study the impact of this evolution on the relations among individuals between telomere length at puberty ($s_0$), father age at birth (FAB), parental birth year (PBY) and birth date (BD), and to compare the behaviour of our model to the findings of~\cite{HolohanDeMeyerEtAl2015}.  Figure~\ref{fig:mu_evolves_PBY_adj_FAB} displays the evolution of the individual telomere length as a function of PBY adjusted for FAB, when $\mu=13$ for individuals born in the time interval $[0,5000]$ and $\mu=16$ for individuals born in the time interval $(5000,6000]$. Figure~\ref{fig:mu_evolves_BD_adj_FAB_PBY} displays the evolution of  the individual telomere length as function of BD adjusted for PBY and FAB for individuals born after time $5000$. Contrarily to the findings of~\cite{HolohanDeMeyerEtAl2015}, we do not observe a significant negative correlation in this last figure (different time frames do not change this fact). The negative impact of the change of $\mu$ at time $5000$ does not materialize in this plot and thus our model can not support the hypothesis of~\cite{HolohanDeMeyerEtAl2015} that an event negatively impacted the human population telomere length a century ago. This suggests that other mechanisms (either a statistical artefact or another type of demographical event) are the cause of the negative correlation found in the above cited study.

\begin{figure}[h]
    \centering                                                                                               \includegraphics[width=10cm]{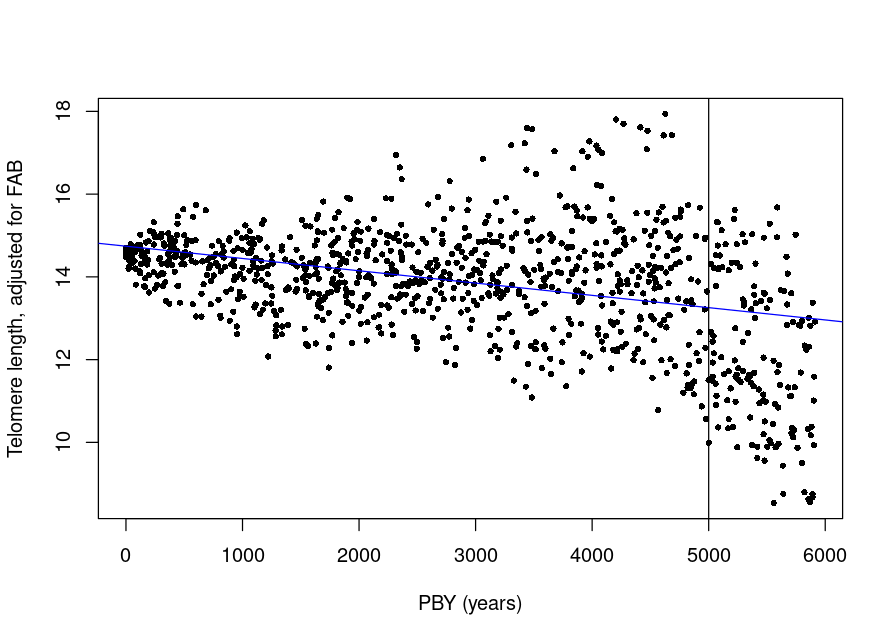}
    \caption{\label{fig:mu_evolves_PBY_adj_FAB} Sample of the population and linear regression lines for $s_0$ as a function of PBY adjusted for FAB, when $\mu=13$ for individuals born in the time interval $[0,5000]$ and $\mu=16$ for individuals born in the time interval $(5000,6000]$. The regression line has coefficient $-2.97\times 10^{-4}$.}
\end{figure}

\begin{figure}[h]
    \centering                                                                                               \includegraphics[width=10cm]{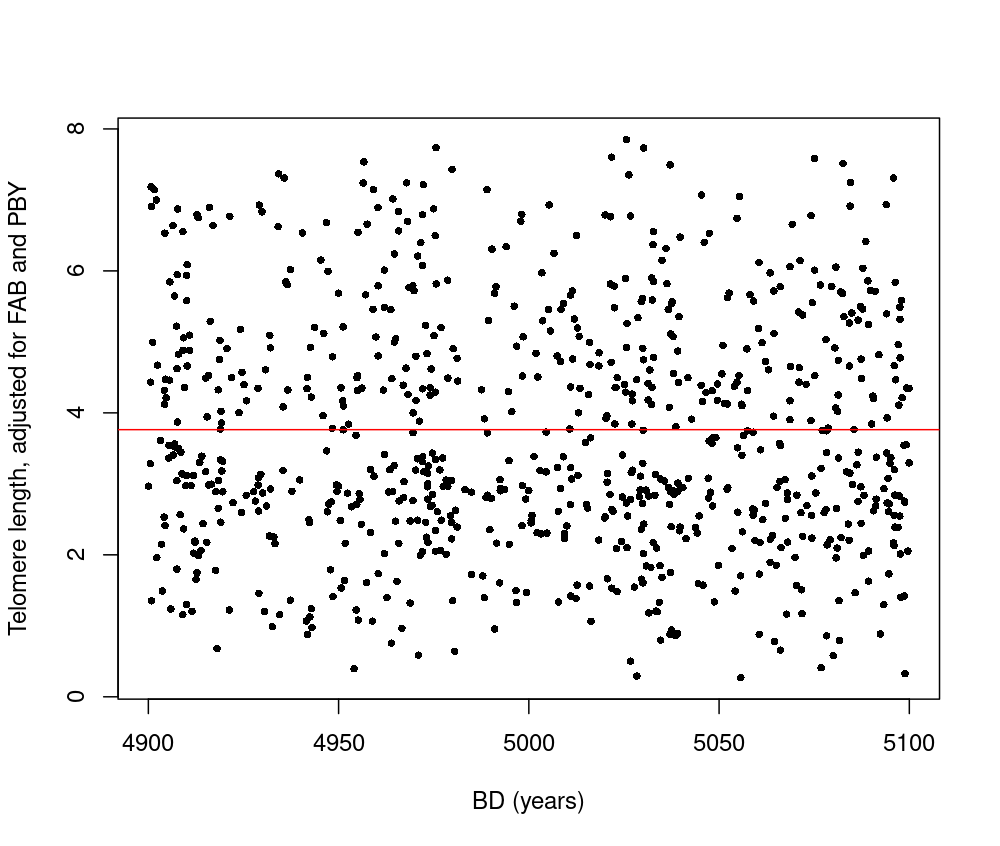}
    \caption{\label{fig:mu_evolves_BD_adj_FAB_PBY} Sample of the population and linear regression lines for $s_0$ as a function of BD adjusted for FAB and PBY, when $\mu=13$ for individuals born in the time interval $[0,5000]$ and $\mu=16$ for individuals born in the time interval $(5000,6000]$. The regression line has coefficient $-2.97\times 10^{-4}$.}
\end{figure}

\section{Conclusion}
We constructed a probabilistic model representing the evolution of telomere length in a population across multiple generations. Various mathematical results, including many-to-one formulas and Perron-Frobenius type results, have allowed us to exhibit interesting properties concerning the asymptotic behaviour of the average telomere length in a population. These results were confirmed empirically by experiments in silico. In particular, we found the definite influence of the attrition parameter, as well as that of a particular modification of the reproduction rate using a time shift of the fertility curve.

We also studied the link between the length of the telomeres of individuals and the date of birth of their ancestor, adjusted by its age at the birth of the descendants. We were able to compare these numerical results with the literature. In particular, we could not confirm that an increase in the attrition parameter led to a negative correlation between telomere length and date of birth adjusted by FAB and PBY, as observed in~\cite{HolohanDeMeyerEtAl2015}. Understanding the discrepancy between the empirical and numerical datas will require further investigations.

The proposed model contains important simplifications; it therefore appears necessary to study richer models. The integration of heterogeneity within the population with attrition factors depending on the geographical or societal environment would be more realistic, using a richer type space $\Sbb$ (see also \cite{HaccouHaccouEtAl2005} and references therein) or building a model using branching processes in random environment (see e.g.~\cite{Bansaye2009,Kaplan1974,Keiding1976,KerstingVatutin2017,SmithWilkinson1969}). Taking into account the influence of certain migratory phenomena would also be interesting (see for instance~\cite{BansayeMeleard2015,KawazuWatanabe1971a,Li2006,Pakes1971}). Finally, it seems essential to move towards the implementation of a bisexual model (see e.g.~\cite{Daley1968,DaleyHullEtAl1986,FritschVillemonaisEtAl2022,GonzalezMolina1996,Hull2003,Molina2010}).

\section*{Acknowledgement}
This work was supported partly by the french PIA project « Lorraine Université d’Excellence », reference ANR-15-IDEX-04-LUE. 

\section*{Competing interest}
The authors have no relevant financial or non-financial interests to disclose.

\section*{Data availability statement}
No datasets were generated or analysed during the current study.

\newpage
\appendix
\section{Proof of the results of Sections~\ref{sec:poissonian} and~\ref{sec:nonpoissonian}}

We first prove Propositions~4 and~5 from Section~\ref{sec:nonpoissonian} and conclude with Proposition~3 from Section~\ref{sec:poissonian}.

\subsection{Proof of Proposition~\ref{prop:Ysemigroup}}
\label{sec:proof1}

We have, setting $f=\ind_{A\times B}$,
\begin{align*}
\E_{0, y_0}\left(f(N_{t+u},Y_{t+u})\ |\ \mathcal{G}_t\right) &= \sum_{k=0}^{+\infty} \sum_{i = k}^{+ \infty} \E_{0, y_0}\left(\ind_{N_t = k}\ \ind_{N_{t+u} = i}\ f(i,Y_{t+u})\ |\ \mathcal{G}_t\right).
\end{align*} 

Fix $i,k\in\Z_+$. Assume first that $i\geq k+1$, and set $j = i - (k+1)$. We define the $\sigma$-algebra $\cF_{k+1}=\sigma\big((\sigma_0,W_0),(\sigma_1,W_1),\ldots,(\sigma_{k+1},W_{k+1})\big)$.
Then, for any $\mathcal{G}_t$-measurable non-negative random variable $Z$, the random variable $Z\ind_{N_t=k}$ is $\cF_{k+1}$-measurable and hence
\begin{align*}
\E_{0,y_0}&\left(Z\,\ind_{N_t = k}\,\ind_{N_{t+u} = i}\ f(i,Y_{t+u})\right)\\ 
&= \E_{0,y_0}\left[\E\left(Z\,\ind_{N_t = k}\,\ind_{N_{t+u} = i}\ f(i, Y_{t+u})\,|\,\cF_{k+1}\right)\right]\\
&= \E_{0,y_0}\left[Z\,\ind_{N_t = k}\ \E\left(\ind_{\sigma_i\leq t+u<\sigma_{i+1}}\ f(i,\vartheta_{t+u-\sigma_i}W_i) \,|\,\cF_{k+1}\right)\right]\\
&= \E_{0,y_0}\left[Z\,\ind_{N_t = k}\ \E^W_{\sigma_{k+1},W_{k+1}}\left(\ind_{\sigma_j\leq t+u<\sigma_{j+1}}\ f(i,\vartheta_{t+u-\sigma_j}W_j)\right)\right],
\end{align*}
where $\E^W_{u,w}$ is the expectation with respect to the law of $(\sigma_n,W_n)_{n\geq 0}$ when $\sigma_0=u$ and $W_0=w$. Now, since $\sigma_{k+1}=t+\tau(Y_t)$, we have
\begin{align*}
\E_{0,y_0}&\left(Z\,\ind_{N_t = k}\,\ind_{N_{t+u} = i}\ f(i,Y_{t+u})\right)\\ 
&= \E_{0,y_0}\left[Z\,\ind_{N_t = k}\ \E^W_{t+\tau(Y_t),W_{k+1}}\left(\ind_{\sigma_j\leq t+u<\sigma_{j+1}}\ f(i,\vartheta_{t+u-\sigma_j}W_j)\right)\right]\\
&= \E_{0,y_0}\left[Z\,\ind_{N_t = k}\ \E^W_{\tau(Y_t),W_{k+1}}\left(\ind_{\sigma_j\leq t+u<\sigma_{j+1}}\ f(i,\vartheta_{u-\sigma_j}W_j)\right)\right].
\end{align*}
Using the fact that $W_{k+1}\sim \Pi(Y_t,\cdot)$, we deduce that
\begin{align*}
\ind_{N_t = k}\,\E&\left(\ind_{N_{t+u} = i}\ f(i,Y_{t+u})\,\mid \cG_t\right)= \ind_{N_t = k}\ \int_{\X}\,\Pi(Y_t,\dd w)\, \E^W_{\tau(Y_t),w}\left(\ind_{\sigma_j\leq t+u<\sigma_{j+1}}\ f(i,\vartheta_{u-\sigma_j}W_j)\right).
\end{align*}
Assume now that $i=k\in\Z_+$, so that
\[
 \E\left[\ind_{N_t = k}\ \ind_{\sigma_i \leq t + u < \sigma_{i+1}}\ f(i, Y_{t+u})\ |\ \mathcal{G}_t\right] = \ind_{N_t = k}\ \ind_{u < \tau(Y_t)}\ f(k,\vartheta_u Y_t)
 \]
Using the last two equations, we deduce that, for all $k\geq 0$,
\begin{align*}
 \E\left(\ind_{N_t = k}\ f(i, Y_{t+u})\ |\ \mathcal{G}_t\right) &= \ind_{N_t = k}\ \Big[ \ind_{u < \tau(Y_t)}\ f(k, \vartheta_u Y_t) \\
 &\quad\quad+ \sum_{j=0}^{+\infty} \int_{\X}\,\Pi(Y_t,\dd w)\, \E^W_{\tau(Y_t),w}\left(\ind_{\sigma_j\leq t+u<\sigma_{j+1}}\ f(N_t+1+j,\vartheta_{u-\sigma_j}W_j)\right) \Big]
 \end{align*}
Summing over $k \geq 0$, we finally obtain
\begin{multline*}
 \E\left(f(Y_{t+u})\ |\ \mathcal{G}_t\right) = \ind_{u < \tau(Y_t)}\ f(N_t, \vartheta_u Y_t)\\ +  \sum_{j=0}^{+\infty} \int_{\X}\,\Pi(Y_t,\dd w)\, \E^W_{\tau(Y_t),w}\left(\ind_{\sigma_j\leq t+u<\sigma_{j+1}}\ f(N_t+1+j,\vartheta_{u-\sigma_j}W_j)\right),
\end{multline*}
which concludes the proof.

\subsection{Proof of Proposition~\ref{prop:Ymanytoone}}
\label{sec:proof2}

Let us introduce the sequence of random indices $(I_n)_{n\geq 1}$, defined inductively by $I_0=0$ and, for all $n\geq 0$,
\[
	I_{n+1}=\inf\{k\geq I_n+1,\text{ such that }\varepsilon_k=1\}.
\]
One easily checks that, with this notation, $W_{I_n}^{(a)}=0$ for all $n\geq 1$ almost surely and that, for all $t\geq 0$,
\begin{equation}
\label{eq:YnewDecomp}
	(Y^{(s)}_t,Y^{(a)}_t)=\sum_{n\geq 0} \ind_{\tau_n\leq t<\tau_{n+1}}(V^{(s)}_n,t-\tau_n),
\end{equation}
where $\tau_n:=\sigma_{I_n}$ and $V_s=W_{I_n}$ (so that $V^{(s)}_n:=W^{(s)}_{I_n}$). We prove the result by induction on $n$.

\medskip\noindent
\textit{Step~1. Initialization of the inductive procedure.}\\
For all $w_0=(s_0,0,\xi_0)\in \X$, we have, setting $f=\ind_{A\times B}$,
\begin{align*}
\E^{\sigma,W}_{0,w_0}\left[f(V^{(s)}_0,\tau_0)\,2^{N_t}\,\ind_{\tau_0\leq t<\tau_1}\right]
&=f(w_0,0)\,\E^{\sigma,W}_{0,w_0}\left[2^{N_t}\,\ind_{\tau_0\leq t<\tau_1}\right]=f(w_0,0).
\end{align*}
Then, for all $w_0=(s_0,0,\xi_0)\in \X$, we have, setting $f=\ind_{A\times B}$,
\begin{align*}
\E^W_{0,w_0}\left[f(V^{(s)}_1,\tau_1)\,2^{N_t}\,\ind_{\tau_1\leq t<\tau_2}\right]
&= \E^W_{0,w_0}\left[f(V^{(s)}_1,\tau_1)\,2^{N_{\tau_1}}\,\ind_{\tau_1\leq t}\,\E^W_{\tau_1,V_1}\left(2^{N_{t-u}}\ind_{t-u<\tau_1}\right)_{\rvert u=\tau_1}\right]\\
&=\E^W_{0,w_0}\left[f(V^{(s)}_1,\tau_1)\,2^{N_{\tau_1}}\,\ind_{\tau_1\leq t}\right]\\
&=\sum_{k\geq 0}  \E^W_{0,w_0}\left[\ind_{I_1=k}f(\xi^{(s)}_0(k),\xi^{(a)}_0(k))\,2^k\,\ind_{\xi^{(a)}_0(k)\leq t}\right]\\
&= \sum_{k\geq 0} f(\xi^{(s)}_0(k),\xi^{(a)}_0(k))\,\ind_{\xi^{(a)}_0(k)\leq t}\,2^k\,\P^W_{0,w_0}(I_1=k)\\
&= \sum_{k\geq 0} f(\xi^{(s)}_0(k),\xi^{(a)}_0(k))\,\ind_{\xi^{(a)}_0(k)\leq t}.
\end{align*}
Integrating with respect to $\P^J_{s_0}$ and using~\eqref{eq:YnewDecomp}, we deduce that
\[
	\int_{\M}\P^J_{s_0}(d\xi_0)\E^Y_{(s_0,0,\xi_0)}\left[f(Y^{(s)}_t,t-Y^{(s)}_t)\,2^{N_t}\,\ind_{\tau_1\leq t<\tau_2}\right]=\mu(s_0,A\times B\cap[0,t]).
\]
This concludes the first step, since $\sum_{k=0}^{N_t} \varepsilon_k=1$ is equivalent to $\tau_1\leq t<\tau_2$.

\medskip\noindent
\textit{Step~2. Induction.}\\
Fix $n\geq 1$ and assume that the property holds true for this value of $n$. For all $w_0=(s_0,0,\xi_0)\in \X$, we have
\begin{multline*}
	\E^W_{0,w_0}\left[f(V^{(s)}_{n+1},\tau_{n+1})\ind_{\tau_{n+1}\leq t<\tau_{n+2}}2^{N_t}\right]\\
	\begin{aligned}
	&=\E^W_{0,w_0}\left[\ind_{\tau_1\leq t} 2^{N_{\tau_1}}\,\E^W_{\tau_1,V_1}\left(f(V^{(s)}_{n},\tau_{n})\ind_{\tau_{n}\leq t<\tau_{n+1}}2^{N_{t-u}}\right)_{\rvert u=\tau_1}\right]\\
	&=\E^W_{0,w_0}\left[\ind_{\tau_1\leq t} 2^{N_{\tau_1}}\,\E^W_{0,V_1}\left(f(V^{(s)}_{n},\tau_{n}+u)\ind_{\tau_{n}\leq t-u<\tau_{n+1}}2^{N_{t-u}}\right)_{\rvert u=\tau_1}\right]\\
	&=\E^W_{0,w_0}\left[\ind_{\tau_1\leq t} 2^{N_{\tau_1}}\,\mu^n\big(V^{(s)}_1,A\times ((B-\tau_1)\cap[0,t-\tau_1])\big)\right],
	\end{aligned}
\end{multline*}
where we used the law of $V_1$ conditionally to $V_1^{(s)}$ and the induction assumption. Similarly as in Step~1, we now decompose over the possible values of $I_1$ and obtain
\begin{multline*}
\E^W_{0,w_0}\left[f(V^{(s)}_{n+1},\tau_{n+1})\ind_{\tau_{n+1}\leq t<\tau_{n+2}}2^{N_t}\right]\\
=\sum_{k\geq 0} \mu^n\big(\xi^{(s)}_0(k),A\times ((B-\xi^{(a)}_0(k))\cap[0,t-\xi^{(a)}_0(k)])\big)\,\ind_{\xi^{(a)}_0(k)\leq t}.
\end{multline*}
Using~\eqref{eq:YnewDecomp} and integrating with respect to $\P^J_{s_0}$, we obtain
\begin{align*}
\int_{\M}\P^J_{s_0}(d\xi_0)\E^Y_{(s_0,0,\xi_0)}&\left[f(Y^{(s)}_t,t-Y^{(s)}_t)\,2^{N_t}\,\ind_{\tau_{n+1}\leq t<\tau_{n+2}}\right]\\
&=\int_{\Sbb\times\R_+} \mu(s_0,\dd s\times \dd u) \, \mu^n\big(s,A\times ((B-u)\cap[0,t-u])\big)\,\ind_{u\leq t}\\
&=\int_{\Sbb\times\R_+} \mu(s_0,\dd s\times \dd u) \, \mu^n\big(s,A\times (B\cap[0,t]-u)\big)\\
&=\mu^{n+1}(s_0,A\times (B\cap[0,t])).
\end{align*}
This concludes the proof.

\subsection{Proof of Proposition~\ref{prop:Zmanytoone}}
\label{sec:proof3}

We denote by $(R_t)_{t\geq 0}$ the semi-group associated to the Feynman-Kac expression of the proposition. Namely, for all $f\in L^\infty(\Sbb\times\R_+)$ and all $(s_0,t_0)\in\Sbb\times\R_+$, we set
\[
R_t f(s_0,t_0)=\E^Z_{s_0,t_0}\left[f(Z_t)\ \exp\left(\int_0^t b\big(Z_u\big)\,\dd u\right)\right].
\]
In the setting of Poissonian reproduction mechanism of Section~\ref{sec:poissonian}, it is also clear that $\bZ:=(Y^{(s)},Y^{(a)})$ is a continuous time pure jump Markov process, with jump rate $b(\bZ)$ and jump distribution $\frac{1}{2}\left(\delta_{\bZ_t}+\gamma_{\bZ}\otimes\delta_0\right)$. We denote by $T$ the semi-group of defined by
\[
T_t f(s_0,t_0)=\E^{\bZ}_{s_0,t_0}\left(f(\bZ_t)\,2^{N_t}\right),
\]
were $N_t$ is the number of jumps of $\bZ$ before time $t$. Of course, this coincides with the last expression of Proposition~\ref{prop:Ymanytoone}.

Our aim is to show that $R$ and $T$ coincide when they are considered as semi-group over the set of uniformly continuous functions $\UC$. This will imply that they coincide on bounded measurable functions and hence, using Proposition~\ref{prop:Ymanytoone}, that the equality of Proposition~\ref{prop:Zmanytoone} holds true. 

We first show that both semi-groups can be restricted to $\UC$. 
\begin{lemma}
	For all $f\in \UC$ and all $t\geq 0$, $R_t f\in\UC$ and $T_t f\in\UC$.
\end{lemma}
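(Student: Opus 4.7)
The plan is to establish invariance of $\UC$ under both semigroups by writing each via a Duhamel-type decomposition indexed by the first reproduction event. For $R_t$, conditioning on whether $Z$ has jumped before time $t$ and using that the no-jump probability $e^{-\int_0^s b(s_0,a_0+u)\,\dd u}$ is exactly cancelled by the Feynman-Kac weight $e^{\int_0^s b(s_0,a_0+u)\,\dd u}$ gives
\[
R_tf(s_0,a_0) = f(s_0,a_0+t) + \int_0^t b(s_0,a_0+s)\int_\Sbb \gamma_{s_0,a_0+s}(\dd z)\,R_{t-s}f(z,0)\,\dd s.
\]
For $T_t$, a parallel decomposition of $\bZ$'s first jump (in which the factor $2$ from $2^{N_t}$ cancels the $1/2$ from the Bernoulli split) gives an analogous equation where $R_{t-s}f(z,0)$ is replaced by the symmetric combination $T_{t-s}f(s_0,a_0+s)+\int_\Sbb\gamma_{s_0,a_0+s}(\dd z)T_{t-s}f(z,0)$ and the integrand is weighted by $e^{-\int_0^s b}$. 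The first step is to derive these two identities cleanly from the respective Markovian structures.

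Second, I would iterate each identity into an absolutely convergent series indexed by the number of reproduction events before time $t$. Boundedness of $b$ by some $\bar b$ ensures that the $n$-th summand has sup-norm at most $\|f\|_\infty(\bar b\,t)^n/n!$, so the tail of the series is arbitrarily small uniformly in $(s_0,a_0)$, and it suffices to prove uniform continuity of each summand. I would proceed by induction on $n$: the zero-jump term is $f$ composed with the translation $(s_0,a_0)\mapsto(s_0,a_0+t)$, hence trivially in $\UC$; the inductive step amounts to showing that
\[
(s_0,a_0)\mapsto\int_0^t b(s_0,a_0+s)\int_\Sbb\gamma_{s_0,a_0+s}(\dd z)\,\varphi(z,s)\,\dd s
\]
belongs to $\UC$ whenever $\varphi(\cdot,s)\in\UC$ with a modulus uniform in $s\in[0,t]$, by combining continuity of $\gamma$, uniform continuity of $\varphi$, and dominated convergence.

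The main obstacle will be this inductive step, and in particular the dependence of the Duhamel integrand on $(s_0,a_0)$ both via the translated age $a_0+s$ and via the type coordinate $s_0$ inside $b$ and $\gamma$. Continuity of $\gamma$ in both arguments is assumed, so the type direction is unproblematic. For the age shift, uniform continuity in $a_0$ requires enough regularity of $b(s_0,\cdot)$ to make the translation $a_0\mapsto a_0+s$ act continuously under the integral; this is where the implicit hypotheses on $b$ beyond bare measurability (compact support together with sufficient continuity in $(s,a)$) must be invoked, so that dominated convergence applies as $(s_0',a_0')\to(s_0,a_0)$. Once this regularity is granted, the induction closes and the series representation delivers $R_tf\in\UC$; the identical argument applied term by term to the series for $T_t$ yields $T_tf\in\UC$, completing the proof and setting up the generator comparison to be carried out next.
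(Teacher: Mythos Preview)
Your approach is valid but takes a genuinely different route from the paper. The paper does not expand in a Duhamel series at all: it invokes Davis's PDMP Feller theory (Theorem~27.6 in~\cite{Davis1993}) as a black box to get that $R_tf$ and $T_tf$ are \emph{continuous} (after a harmless truncation of the weight), and then upgrades continuity to uniform continuity by the following device: since $b$ has compact support $K$, the process starting outside $K$ never jumps, so $R_tf(s_0,a_0)=f(s_0,a_0+t)$ there, which is uniformly continuous because $f$ is; on $K$ itself, continuity implies uniform continuity by compactness. The two pieces glue to give $R_tf\in\UC$, and the same for $T_tf$. This is short and avoids any series bookkeeping.

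Your series argument is more elementary (no PDMP literature needed) but heavier, and as written has two soft spots you should tighten. First, ``dominated convergence'' yields continuity of the integral in $(s_0,a_0)$, not uniform continuity; to close the induction in $\UC$ you should either assume uniform continuity of $b$ and of $(s_0,a)\mapsto\gamma_{s_0,a}$ (in total variation, say), or---more in line with the stated hypotheses---note that each term $V_n$ vanishes for $(s_0,a_0)$ outside the compact set $K_\Sbb\times[0,A]$ determined by $\mathrm{supp}(b)$, so continuity there already gives uniform continuity. You mention compact support but do not actually use it this way. Second, the Duhamel identity for $T_t$ contains the self-referential term $T_{t-s}f(s_0,a_0+s)$, so the iteration is not ``identical'' to the one for $R_t$: the $n$-th term is a sum over the $2^n$ stay/move patterns and the sup-norm bound is $(2\bar b\,t)^n/n!$, not $(\bar b\,t)^n/n!$. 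This still converges and the induction still closes, but the write-up should acknowledge the extra branching. Finally, note that both your route and the paper's implicitly require continuity of $b$ beyond the bare measurability stated in Section~\ref{sec:modeldef}; the paper hides this inside the hypotheses of Davis's Theorem~27.6.
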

\begin{proof}
	The process $(Z_t,\int_0^t b(Z_u)\,\dd u)_{t\geq 0}$ is a PDMP which satisfies the conditions of~\cite[Theorem~27.6]{Davis1993}. Hence, for any value of $m>0$, the application
	\[
		g_m:(s_0,t_0)\mapsto \E^Z_{s_0,t_0}\left[f(Z_t)\ \exp\left(m\wedge\int_0^t b\big(Z_u\big)\,\dd u\right)\right]
	\]
	is bounded and continuous. But $\int_0^t b\big(Z_u\big)\,\dd u$ is uniformly bounded by $t\,\|b\|_\infty$, which implies that, choosing $m$ large enough, $R_t f$ is continuous and bounded. Moreover, $b$ has compact support, say $K$. Hence, for all $(s_0,t_0)\notin K$, $(Z_t)_{t\geq 0}$ starting from $(s_0,t_0)$ is equal to $(s_0,t_0+t)$ almost surely and hence $R_t f(s_0,t_0)=f(s_0,t_0+t)$ which is uniformly compact over $(s_0,t_0)\notin K$. Since $R_t f$ is continuous over $K$, it is also uniformly continuous over $K$, and we deduce that $R_t f\in\UC$.
	
	Similarly, the process $(\bZ_t,N_t)_{t\geq 0}$ is a PDMP which satisfies the conditions of~\cite[Theorem~27.6]{Davis1993}. Hence, for any value of $m>0$, the application
	\[
	h_m:(s_0,t_0)\mapsto \E^\bZ_{s_0,t_0}\left[f(\bZ_t)\ 2^{m\wedge N_t}\right]
	\]
	is bounded an continuous. But 
	\[
	\left|\E^\bZ_{s_0,t_0}\left[f(\bZ_t)\ 2^{N_t}\right]-\E^\bZ_{s_0,t_0}\left[f(\bZ_t)\ 2^{m\wedge N_t}\right]\right|\leq \|f\|_\infty \E^\bZ_{(s_0,t_0)}\left| 2^{N_t}\ind_{N_t>m}\right|,
	\]
	where the right hand side converges to $0$ uniformly in $(s_0,t_0)\in \Sbb\times \R_+$ since $b$ is bounded. This implies that $T_t f$ is the uniform limit of a sequence of bounded continuous functions, and hence that it is itself bounded continuous. As in the case of $R_t f$, we deduce that $T_t f\in\UC$.
\end{proof}

We then obtain the strong continuity (see \cite[Definition~2.1]{Pazy1983}) of the semi-groups when they are defined over $\UC$.
\begin{lemma}
	\label{lem:strongcontinuity}
	The semi-groups $R$ and $T$ are strongly continuous semi-groups on $\UC$, which means that
	\[
		\left\|R_t f-f\right\|_\infty\xrightarrow[t\to0]{} 0,
	\]
	and similarly for $T$.
\end{lemma}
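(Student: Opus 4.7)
The plan is to decompose the expectations defining $R_t f - f$ and $T_t f - f$ according to whether the underlying PDMP has performed no jump or at least one jump on the interval $[0,t]$, and to control each contribution uniformly in $(s_0,t_0)\in\Sbb\times\R_+$. Since $f\in\UC$, I let $\omega_f$ denote its modulus of uniform continuity, so that $\omega_f(t)\to 0$ as $t\to 0$, and I repeatedly use that $\|b\|_\infty<\infty$ together with the Poisson-type domination that this boundedness forces on the jump counting processes.

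For the semi-group $R$, let $J^Z_t$ be the number of jumps of $Z$ on $[0,t]$; since the instantaneous jump rate $b(Z_u)$ is bounded by $\|b\|_\infty$, the process $J^Z$ is stochastically dominated by a Poisson process of rate $\|b\|_\infty$, giving $\P(J^Z_t\geq 1)\leq t\|b\|_\infty$. On $\{J^Z_t=0\}$ the deterministic flow gives $Z_u=(s_0,t_0+u)$, whence
\[
\left|f(Z_t)\,e^{\int_0^t b(Z_u)\,\dd u}-f(s_0,t_0)\right|\leq \omega_f(t)+\|f\|_\infty\bigl(e^{t\|b\|_\infty}-1\bigr),
\]
while on $\{J^Z_t\geq 1\}$ the integrand is bounded by $\|f\|_\infty(e^{t\|b\|_\infty}+1)$. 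Taking expectations and then the supremum in $(s_0,t_0)$ yields $\|R_tf-f\|_\infty\to 0$.

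For the semi-group $T$ the structure of the argument is identical, with $N_t$ playing the role of $J^Z_t$: on $\{N_t=0\}$ the integrand is at most $\omega_f(t)$, and on $\{N_t\geq 1\}$ it is bounded by $\|f\|_\infty(2^{N_t}+1)$. The only additional ingredient is the uniform bound
\[
\E[2^{N_t}\ind_{N_t\geq 1}]\leq \E[2^{N_t}]-\P(N_t=0)\leq e^{t\|b\|_\infty}-e^{-t\|b\|_\infty}=O(t),
\]
obtained by dominating $N_t$ by a Poisson variable with parameter $t\|b\|_\infty$ (so that the probability generating function at $z=2$ is bounded by $e^{t\|b\|_\infty}$). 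Putting the two regimes together and taking the supremum in $(s_0,t_0)$ gives $\|T_tf-f\|_\infty\to 0$.

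The analysis is essentially elementary; the only point that deserves a word of care is the Poisson domination for $N_t$, because $N_t$ counts both the ``trivial'' jumps of type $\Pi_0$ and the ``true'' branch-switching jumps of type $\Pi_1$. The total jump rate of $\bZ$ nonetheless remains $b(\bZ_u)\leq \|b\|_\infty$ regardless of which kernel is used, so the coupling with a homogeneous Poisson process of rate $\|b\|_\infty$ goes through unchanged, and uniformity in $(s_0,t_0)$ is never in doubt since all bounds depend only on $\|b\|_\infty$, $\|f\|_\infty$ and $\omega_f$.
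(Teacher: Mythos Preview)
Your proof is correct and follows essentially the same approach as the paper: split according to $\{N_t=0\}$ versus $\{N_t\geq 1\}$, use uniform continuity of $f$ on the no-jump part, and control the jump part via the Poisson domination coming from $\|b\|_\infty<\infty$. The paper only spells out the argument for $T$ (declaring $R$ similar) and leaves the bound on $\E[2^{N_t}\ind_{N_t\geq 1}]$ implicit, whereas you make both explicit, but there is no substantive difference.
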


\begin{proof}
	Since the proof is similar for both semi-group, we only detail it for $T$. Let $f \in \UC$, $\varepsilon > 0$ and  $\delta > 0$ such that
	\[
	 \forall\  z_1,z_2,\  d(z_1,z_2) < \delta \Rightarrow |f(z_1)-f(z_2| < \varepsilon.
	 \]
	
	For any $z_0 = (s_0, a_0) \in \Sbb\times \R$ and $t < \delta$, we have
	\begin{align*}
	\left|T_t f(s_0, a_0) - f(s_0, a_0)\right| &= \left|\E_{s_0, a_0}\left(f(Z_t) 2^{N_t}\ind_{N_t = 0}\right) + \E_{s_0, a_0}\left(f(Z_t) 2^{N_t}\ind_{N_t \geq 1}\right) - f(s_0, a_0)\right|\\
	& \leq \left|f(s_0, a_0+t) \P(N_t = 0) - f(s_0, a_0)\right| + \|f\|_\infty\E_{s_0, a_0}\left(2^{N_t}\ind_{N_t \geq 1}\right) \\
	& \leq \left|f(s_0, a_0+t) - f(s_0, a_0)\right| + \|f\|_\infty\P(N_t \geq 1) + \|f\|_\infty\E_{s_0, a_0}\left(2^{N_t}\ind_{N_t \geq 1}\right).
	\end{align*}
	
	The term $\left|f(s_0, a_0+t) - f(s_0, a_0)\right|$ goes to $0$ when $t\to0$, uniformly in $(s_0,a_0)$, since $f$ is uniformly continuous. Moreover, $\E_{s_0, a_0}\left(2^{N_t}\ind_{N_t \geq 1}\right)$ is bounded (since $N_t$ is stochastically dominated by a Poisson random variable) and goes to $0$ since $N_t$ is non-decreasing right continuous at time $0$, with $N_0=0$.

This concludes the proof of Lemma~\ref{lem:strongcontinuity}.
\end{proof}

We now use the characterization of strongly continuous semi-groups by their infinitesimal generators (see~\cite{Pazy1983} Definition~1.1 and Theorem~2.6).

\begin{lemma}
	Fix $f \in \UC$. Then there exists a constant $C > 0$ such that
	\[ \left\|\frac{R_t f -f}{t} - \frac{T_t f -f}{t}\right\|_\infty \leq C\, t.\]
	In particular, $R$ and $T$ have the same infinitesimal generator over $\UC$ and hence they coincide.
\end{lemma}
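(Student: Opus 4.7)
I will carry out a first-jump Duhamel expansion for each of $R_t f$ and $T_t f$, exploit the multiplicative cancellations specific to the Feynman--Kac weights, and close a Grönwall loop to obtain the stronger estimate $\|R_t f - T_t f\|_\infty \leq C t^2$, which yields the announced bound at once upon dividing by $t$.

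First, conditioning on the first jump time of $Z$ (resp.\ $\bZ$) and noting that in $R_t$ the weight $\exp(\int b)$ exactly cancels the survival probability on the no-jump event, while in $T_t$ the factor $2$ contributed at each jump by $2^{N_t}$ cancels the $\tfrac{1}{2}$ in $\Pi = \tfrac{1}{2}\Pi_0 + \tfrac{1}{2}\Pi_1$, I obtain, writing $B(u) := \int_0^u b(s_0, a_0 + v)\,\dd v$,
\begin{equation*}
R_t f(s_0, a_0) = f(s_0, a_0 + t) + \int_0^t b(s_0, a_0 + s) \int_\Sbb R_{t-s} f(s', 0)\, \gamma_{s_0, a_0 + s}(\dd s')\, \dd s,
\end{equation*}
\begin{equation*}
T_t f(s_0, a_0) = f(s_0, a_0 + t)\, e^{-B(t)} + \int_0^t b(s_0, a_0 + s)\, e^{-B(s)} \Bigl[T_{t-s} f(s_0, a_0 + s) + \int_\Sbb T_{t-s} f(s', 0)\, \gamma_{s_0, a_0 + s}(\dd s')\Bigr] \dd s.
\end{equation*}

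Then, applying the identity $1 - e^{-B(t)} = \int_0^t b(s_0, a_0 + s)\, e^{-B(s)}\,\dd s$ to re-express the pure flow term $f(s_0, a_0 + t)$ in $R_t f$, the difference $R_t f - T_t f$ splits into three pieces: (i) an integral pairing $f(s_0, a_0 + t) = f(s_0, (a_0 + s) + (t-s))$ with $T_{t-s} f(s_0, a_0 + s)$; (ii) an integral of $R_{t-s} f(s', 0) - T_{t-s} f(s', 0)$ against the ``newborn'' kernel $\gamma$; (iii) a remainder carrying the factor $1 - e^{-B(s)} = O(s)$. I will then invoke the auxiliary bounds $\|R_u f(\cdot, \cdot) - f(\cdot, \cdot + u)\|_\infty \leq C u$ and $\|T_u f(\cdot, \cdot) - f(\cdot, \cdot + u)\|_\infty \leq C u$, valid for $f \in \UC$ on bounded intervals of $u$: the first is immediate from the Duhamel formula for $R$; the second follows from decomposing $T_u f(s_0, a_0) - f(s_0, a_0 + u) = -f(s_0, a_0 + u)(1 - e^{-B(u)}) + \E[f(\bZ_u)\, 2^{N_u}\ind_{N_u \geq 1}]$ and using that $N_u$ is stochastically dominated by a Poisson random variable of parameter $\|b\|_\infty u$, so that $\E[2^{N_u}\ind_{N_u \geq 1}] = O(u)$. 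Pieces (i) and (iii) are then of order $t^2$.

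Setting $D(t) := \|R_t f - T_t f\|_\infty$, piece (ii) contributes $\|b\|_\infty \int_0^t D(u)\,\dd u$, so overall $D(t) \leq C_1 t^2 + \|b\|_\infty \int_0^t D(u)\,\dd u$; Grönwall's lemma delivers $D(t) \leq C t^2$ on any bounded interval, hence the first claim. For the second claim, this bound forces the limits $\lim_{t \to 0^+}(R_t f - f)/t$ and $\lim_{t \to 0^+}(T_t f - f)/t$ in $\UC$ to either both exist or both fail, and to be equal when they exist; thus the infinitesimal generators of $R$ and $T$ have the same domain and same action on $\UC$. Uniqueness of the strongly continuous semi-group generated by a given operator (Theorem~2.6 of Chapter~1 in~\cite{Pazy1983}) then forces $R = T$. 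The main difficulty I anticipate is the bookkeeping in the three-piece decomposition: the key pairing between $f(s_0, a_0 + t)$ and $T_{t-s} f(s_0, a_0 + s)$ in piece (i), which upgrades the a priori $o(1)$ quantity $\|T_u f - f(\cdot, \cdot + u)\|_\infty$ into the $O(u)$ estimate needed to close the Grönwall loop, is only visible once the identity $1 - e^{-B(t)} = \int_0^t b\, e^{-B(s)}\,\dd s$ has been used to align the two representations.
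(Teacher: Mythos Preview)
Your proof is correct, but it takes a genuinely different route from the paper's. The paper proceeds by a direct small-time expansion: it decomposes each of $R_t f(z_0)$ and $T_t f(z_0)$ according to whether the underlying process has $0$, $1$, or $\geq 2$ jumps in $[0,t]$. The $\geq 2$-jump contributions are shown to be $\mathcal O(t^2)$ uniformly via Poisson domination; the $0$-jump contributions both equal $f(s_0,a_0+t)$ exactly (after the weight/survival cancellation for $R$); and the $1$-jump contributions differ only by the factor $e^{\int_0^u b + \int_0^{t-u} b} - 1 = \mathcal O(t)$ inside an integral that is already $\mathcal O(t)$, giving $\mathcal O(t^2)$. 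No recursion or Gr\"onwall argument is used.

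Your approach instead writes exact first-jump Duhamel identities for $R_t$ and $T_t$ in terms of themselves at shorter times, aligns the two via the identity $1-e^{-B(t)}=\int_0^t b\,e^{-B}\,\dd s$, and closes the resulting inequality on $D(t)=\|R_tf-T_tf\|_\infty$ with Gr\"onwall. This is more bookkeeping and requires the auxiliary a priori bound $\|T_uf - f(\cdot,\cdot+u)\|_\infty=\mathcal O(u)$, but it is a clean and robust method that does not require separately estimating the $\geq 2$-jump part of $R_t$. Your discussion of the domains of the two generators (that existence of one limit forces existence of the other) is also slightly more careful than the paper's, which simply asserts equality of the generators and invokes the same uniqueness theorem from~\cite{Pazy1983}.
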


\begin{proof}
Let $z_0 = (s_0, a_0) \in \Sbb\times\R_+$. We have 
	\begin{align*}
	R_t f(z_0)&=\E^Z_{z_0}\left(f(Z_t)\ind_{\overline{N}_t=0}e^{\int_0^t b(Z_u)\,\dd u}\right)\\
	&\quad\quad+\E^Z_{z_0}\left(f(Z_t)2\ind_{\overline{N}_t=1}e^{\int_0^t b(Z_u)\,\dd u}\right)\\
	&\quad\quad+\E^Z_{z_0}\left(f(Z_t)2^{N_t}\ind_{\overline{N}_t\geq 2}e^{\int_0^t b(Z_u)\,\dd u}\right)\\
	&=f(s_0,a_0+t)e^{\int_0^t b(s_0,a_0+u)\,\dd u}\,e^{-\int_0^t b(s_0,a_0+u)\,\dd u}\\
	&\quad\quad+\int_0^t\dd u\,b(s_0,a_0+u)e^{-\int_0^u b(s_0,a_0+v)\,\dd v}\\
	&\phantom{\quad\quad+\int_0^t\dd u\,b(s_0,a_0}\times \int_{\Sbb}  \gamma_{s_0, a_0+u}( \dd s) f(s, t-u)\,e^{\int_0^u b(s_0, a_0 + v)\dd v + \int_0^{t-u}b(s,v)\dd v}\\
	&\quad\quad+{\cal O}(t^2),
	\end{align*}
	with ${\cal O}(t^2)$ uniform in the initial position $z_0\in\Sbb\times \R_+$,
	where the first term is obtained from the definition of the process $Z$ before its first jump-time, the second term from the jump rate of the process and its jump measure, and the third one from the fact that $b$ is uniformly bounded and $N_t$ is stochastically dominated by a Poisson random variable with intensity $\|b\|_\infty$ (hence uniformly in $z_0\in\Sbb\times\R_+$).

Similarly,
	\begin{align*}
	T_t f(z_0) &= \E^\bZ_{z_0}\left(f(Z_t)\ 2^{N_t} \ind_{N_t = 0}\right) + \E^\bZ_{z_0}\left(f(Z_t)\ 2^{N_t} \ind_{N_t = 1}\right) + \E_{z_0}\left(f(Z_t)\ 2^{N_t} \ind_{N_t \geq 2}\right)\\
	&=f(s_0, a_0 + t)+\int_0^t \dd u \, b(s_0, a_0+u)\,e^{-\int_0^{u} b(s_0, a_0+v)\,\dd v} \int_{\Sbb} \gamma_{s_0, a_0 + u}( \dd s) f(s, t - u)+{\cal O}(t^2),
	\end{align*}
	where ${\cal O}(t^2)$ is uniform in $(s_0,a_0)$ (since the increment rate of $N_t$ is uniformly bounded, the probability that $2$ or more jump occur in time $t$ is of order $t^2$).

Hence we obtain
\begin{multline*}
\left|R_t f(z_0)-T_t f(z_0)\right|=\int_0^t\dd u\,b(s_0,a_0+u)e^{-\int_0^u b(s_0,a_0+v)\,\dd v} \int_{\Sbb}  \gamma_{s_0, a_0+u}(\dd s) |f(s, t-u)|\\
\times\left|e^{\int_0^u b(s_0, a_0 + v)\dd v + \int_0^{t-u}b(s,v)\dd v}-1\right|+{\cal O}(t^2).
\end{multline*}
But the term $\left|e^{\int_0^u b(s_0, a_0 + v)\dd v + \int_0^{t-u}b(s,v)\dd v}-1\right|$ is of smaller than $2\|b\|_\infty t$ for $t$ small enough (uniformly in $z_0\in\Sbb\times \R_+$), and we then deduce that 
\[
	\left\|R_t f(z_0)-T_t f(z_0)\right\|={\cal O}(t^2),
\]
which concludes the proof of the first assertion.

The fact that $R$ and $T$ have the same infinitesimal generator is a direct consequence of its definition (see ~\cite[Definition~1.1]{Pazy1983}). Since they are strongly continuous semi-group, we deduce from \cite[Theorem~2.6]{Pazy1983} that they are equal.

\end{proof}

We have proved that, for all fixed $t\geq 0$, $T_t f=R_t f$ for all functions $f\in\UC$. Since $\Sbb\times \R_+$ is a metric space, this implies that $T_t f=R_t f$ for all bounded Borel function $f$ (see for instance \cite[Lemma~2.3]{Varadarajan1958}). This and Proposition~\ref{prop:Ymanytoone} conclude the proof of Proposition~\ref{prop:Zmanytoone}.

\section{Proof of the results of Sections~\ref{sec:Malthus}}

This section is devoted to the proof of Theorem~\ref{thm:malthus1} and Corollary~\ref{cor:main}.

\subsection{Proof of Theorem~\ref{thm:malthus1}}
\label{sec:proofM1}
  
  Let $R^M$ be the sub-Markov semi-group defined as 
  \begin{align*}
  	\delta_{(s_0,a_0)} R_t^M :&=e^{-\|b\|_\infty t}\E^Z_{a_0,s_0}\left[\exp\left(\int_0^t
  	b(Z_u)\,du\right)\ind_{Z_t\in
  		\cdot}\ind_{t<\tau_M}\right]\\
  	&=\E^Z_{a_0,s_0}\left[\exp\left(\int_0^t
  	-\kappa(Z_u)\,du\right)\ind_{Z_t\in
  		\cdot}\ind_{t<\tau_M}\right],
  \end{align*}
  where $\tau_M=\inf\{t\geq 0,\ Z^{(a)}_t\geq b_M\}$ and $\kappa(z)=\|b\|_\infty-b(z)$ for all $z\in\Sbb\times \R_+$. The semi-group $R^M$ is the semi-group of the process $Y^M$, defined as follows : it evolves as $Z$ but with an additional killing rate $\kappa$ and killed when its age reaches $\tau_M$. By \textit{killed}, we mean as usual that the process is sent to a cemetery point $\d\notin\Sbb\times\R_+$ at the killing time, in a c\`adl\`ag way. We denote by $\tau^M_\d:=\inf\{t\geq 0,\ Y^M_t\in \d\}$ its killing time, and by $\E^M$ (resp. $\P^M$) the expectation (resp. the probability) associated to the law of $Y^M$, so that
  \[
  \delta_{(s_0,a_0)} R_t^M =\E^M_{s_0,a_0}\left[\ind_{Y^M_t\in
  	\cdot}\ind_{t<\tau^M_\d}\right],\ \forall (s_0,a_0)\in\Sbb\times [0,b_M).
  \]
  
  \begin{lemma}
  	\label{lem:lemM1}
  	There exists positive constants $\lambda^M_0\in(0,\|b\|_\infty)$, $\lambda_1>0$, $C>0$, $t_0\geq 0$, a probability
  	measure $\Upsilon_M$ on $\Sbb\times[0,b_M)$ and a bounded function
  	$\eta:[0,b_M)\rightarrow (0,+\infty)$ such that, for all $t\geq t_0$ and all $(a_0,s_0)\in
  	[0,b_M)\times S$,
  	\[
  	\left\|e^{\lambda_0^M t}\delta_{(s_0,a_0)}R^M_t-\eta(s_0,a_0)\,\Upsilon_M\right\|_{TV}\leq
  	C\,e^{-\lambda_1 t}\,\eta(s_0,a_0).
  	\]
  \end{lemma}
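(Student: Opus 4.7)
The plan is to cast the lemma as a direct application of the abstract quasi-stationary convergence results of \cite{ChampagnatVillemonais2016b} to the killed piecewise-deterministic Markov process $Y^M$. Recall that $Y^M$ lives on $\Sbb\times[0,b_M)$, flows deterministically as $(s,a)\mapsto (s,a+t)$, jumps at rate $b(s,a)$ to $(s',0)$ with $s'\sim \gamma_{s,a}$, is killed at additional rate $\kappa(s,a)=\|b\|_\infty-b(s,a)$ and is killed as soon as its age component reaches $b_M$. The core task is then to verify the Doeblin-type minorization condition (A1) and the survival-comparison condition (A2) of that reference.

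\textbf{Verification of (A1).} The objective is to produce $t_1>0$, $c_1>0$ and a probability measure $\nu$ on $\Sbb\times[0,b_M)$ such that $\delta_{(s_0,a_0)} R^M_{t_1}\geq c_1\nu$ uniformly in the starting state $(s_0,a_0)$. Fix small $\delta,\varepsilon>0$ and set $t_1=b_M+\varepsilon$. On a time window of this length the trajectory, from any initial state, realises with probability uniformly bounded below the following event: flow without jumping and without being $\kappa$-killed until the age enters the window $[b_m+\delta,b_M-\delta]$ (possible since the flow duration is at most $b_M$ and $\kappa$ is bounded), then perform exactly one jump inside that age window (possible since $b$ is bounded below on the compact set $\Sbb\times[b_m+\delta,b_M-\delta]$), then flow a remaining amount $u\in(0,\varepsilon)$ without further jumps or killing. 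Conditionally on this event, the post-jump type has a density bounded below by $\inf g>0$ by continuity and strict positivity of $g$ on the compact set $\Sbb^2\times[b_m+\delta,b_M-\delta]$, and the age at time $t_1$ is uniform on $(0,\varepsilon)$ up to a bounded Jacobian. Assembling these factors yields (A1) with $\nu$ proportional to $\pi(\dd s)\otimes \dd u$ on $\Sbb\times(0,\varepsilon)$.

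\textbf{Verification of (A2) and conclusion.} Condition (A2), requiring $\P^M_\nu(t<\tau^M_\d)\geq c_2\sup_{(s_0,a_0)}\P^M_{(s_0,a_0)}(t<\tau^M_\d)$, follows from (A1) by a standard one-step comparison: any initial point can be coupled to a sample from $\nu$ at time $t_1$ with probability at least $c_1$, while the Feynman--Kac penalty accumulated over $[0,t_1]$ is at most $e^{\|b\|_\infty t_1}$ in amplitude, and the Markov property transports this comparison to all later times. By \cite[Theorem~2.1]{ChampagnatVillemonais2016b}, conditions (A1) and (A2) yield the principal eigenvalue $-\lambda_0^M$, the quasi-stationary distribution $\Upsilon_M$, the bounded eigenfunction $\eta$ and the exponential convergence at rate $\lambda_1>0$ stated in the lemma. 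The positivity $\lambda_0^M>0$ follows from the almost-sure finite lifetime of $Y^M$ (the age barrier $b_M$ is reached in time at most $b_M$), and the upper bound $\lambda_0^M<\|b\|_\infty$ comes from the observation that from any state lying in the positivity region of $b$, the survival probability of $Y^M$ decays strictly slower than $e^{-\|b\|_\infty t}$ because $\kappa<\|b\|_\infty$ on a set of positive $\Upsilon_M$-measure.

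\textbf{Main obstacle.} The main technical hurdle is the Doeblin step: one has to track, for an arbitrary initial state, the density of $\delta_{(s_0,a_0)}R^M_{t_1}$ produced by a complete flow--jump--flow cycle and bound it below by a fixed positive density over a common compact support. The standing assumptions (compactness of $\Sbb$, positivity of $b$ on $\Sbb\times[b_m,b_M]$, and continuity and strict positivity of $g$) are tailor-made for this, but the computation requires a careful decomposition conditioning on the single jump time and on the post-jump type, uniformly in the starting point.
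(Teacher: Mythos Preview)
Your overall strategy --- reduce the lemma to checking Assumptions~(A1) and~(A2) of \cite{ChampagnatVillemonais2016b} for the killed PDMP $Y^M$ --- is exactly the paper's strategy. However, the execution contains a genuine gap.

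\textbf{The unconditioned Doeblin bound you claim is false.} You phrase (A1) as the existence of $t_1,c_1,\nu$ with $\delta_{(s_0,a_0)}R^M_{t_1}\geq c_1\nu$ \emph{uniformly} in $(s_0,a_0)\in\Sbb\times[0,b_M)$. This cannot hold: if $a_0\uparrow b_M$, the process is killed at the age barrier after time $b_M-a_0$ unless it jumps first, and the probability of jumping in that window is at most $\|b\|_\infty(b_M-a_0)\to 0$. Hence $\delta_{(s_0,a_0)}R^M_{t_1}(\Sbb\times[0,b_M))\to 0$, and no uniform lower bound by a fixed positive measure is possible. Concretely, your flow--jump--flow scenario is not available when $a_0>b_M-\delta$, and even for $a_0$ just below $b_M-\delta$ its probability is not bounded away from zero. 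The correct (A1) in the reference is the \emph{conditional} statement $\P^M_x(Y^M_{t_A}\in\cdot\mid t_A<\tau^M_\d)\geq c_A\beta$. In the paper's proof this is obtained by showing that both the law of $Y^M_{t_A}$ and the survival probability $\P^M_{(s_0,a_0)}(t_A<\tau^M_\d)$ carry the same degenerating factor $\int_{a_0}^{b_M}b(s_0,a)\,\dd a$ (the first-jump mass), which cancels upon conditioning; moreover the paper iterates \emph{several} jumps (not a single one) to spread the jump-time law enough to dominate a Lebesgue interval independent of the starting point.

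\textbf{The (A2) argument points the wrong way, and a side remark is incorrect.} From a minorization $\delta_x R^M_{t_1}\geq c_1\nu$ one gets $\P^M_x(t+t_1<\tau^M_\d)\geq c_1\P^M_\nu(t<\tau^M_\d)$, i.e.\ a lower bound on $\P^M_x$ in terms of $\P^M_\nu$, which is the opposite of what (A2) requires. The paper instead uses the \emph{upper} bound $\overline g$ on the jump density to dominate $\P^M_{(s_0,a_0)}(t<\tau^M_\d)$ by $\overline g\,\P^M_\beta(t-b_m-b_M<\tau^M_\d)$ after one jump, and then closes the loop via the already-established conditional (A1). Finally, your justification of $\lambda_0^M>0$ (``the age barrier is reached in time at most $b_M$'') is incorrect, since jumps reset the age to $0$; the paper obtains $0<\lambda_0^M<\|b\|_\infty$ from $0<\Upsilon_M(\kappa)<\|b\|_\infty$.
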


  \begin{proof}
	Our strategy if to check that Assumption~A from~\cite{ChampagnatVillemonais2016b} is satisfied by the process $Y^M$. Once this is done,  Theorem~1.1 in this reference  states that
	\[
			\left\|e^{\lambda^M_0 t}\delta_{(s_0,a_0)}R^M_t-\delta_{(s_0,a_0)}R^M_t\ind_{\Sbb\times[0,b_M)}\,\Upsilon_M\right\|_{TV}\leq
		C\,e^{-\lambda_1 t}\,\delta_{(s_0,a_0)}R^M_t\ind_{\Sbb\times[0,b_M)}.
	\]
	Now, using Theorem~2.1 and Equation~(3.2) in \cite{ChampagnatVillemonais2017b}, we deduce that there exists $t_0>0$ such that, for all $t\geq t_0$,
	\[
		\left|e^{\lambda^M_0 t}\delta_{(s_0,a_0)}R^M_t\ind_{\Sbb\times[0,b_M)}-\eta(s_0,a_0)\right|\leq C\eta(s_0,a_0)e^{-\lambda_1 t},
	\]
	up to a change in the constant $C$, and for some positive bounded function $\eta:\Sbb\times[0,b_M)\to\R_+$. The last two equations allow to conclude the proof of  Lemma~\ref{lem:lemM1} (the fact that $0<\lambda_0^M<\|b\|_\infty$ is a consequence of the fact that $\kappa\leq\|b\|_\infty$ and $0<\Upsilon_M(\kappa)<\|b\|_\infty$).
	
	\medskip
	It remains to check Assumption~A, which is stated as follows.
	
	\noindent\textbf{Assumption A.} There exists a probability measure $\beta$ on $\Sbb\times[0,b_M)$ and positive constants $t_A$, $c_A,c'_A$, such that
	\begin{itemize}
		\item[A1.] for all $x\in\Sbb\times[0,b_M)$,
		\[
			\P^M\left(Y^M_{t_A}\in\cdot\,\mid\,t_A<\tau^M_\d\right)\geq c_A\beta,
		\]
		\item[A2.] for all $x\in\Sbb\times[0,b_M)$ and all $t\geq 0$,
		\[
			\P^M_\beta(t<\tau^M_\d)\geq c'_A \P^M_x(t<\tau^M_\d).
		\]
	\end{itemize}
	The end of the proof, divided in two steps, is dedicated to checking A1 and A2 respectively.
	
	\medskip\noindent\textit{Step 1. Checking A1.} We denote by $\tau_1\leq \cdots\leq \tau_n\leq\cdots$ the successive jump times of $Y^M$, with $\tau_n=+\infty$ if there are less than $n$ jumps. Using the definition of $Y^M$ and the strong Markov property at time $\tau_n$, we obtain that, for all $s_0\in \Sbb$ and all $n\geq 0$,
	\begin{align*}
		\P^M_{(s_0,0)}&\left(Y^M_{\tau_{n+1}}\in \dd s_{n+1}\times\{0\},\tau_{n+1}\in \dd a_{n+1}\right)\\
			&=\int_{\Sbb\times \R_+} \P^M_{(s_0,0)}\left(Y^M_{\tau_{n}}\in \dd s_n\times\{0\},\tau_{n}\in \dd a_n\right) \, \P^M_{(s_n,0)}\left(Y^M_{\tau_1}\in \dd s_{n+1},\ \tau_1\in\dd a_{n+1}-a_n\right)\\
			&\geq \int_{\Sbb\times \R_+} \P^M_{(s_0,0)}\left(Y^M_{\tau_{n}}\in \dd s_n\times\{0\},\tau_{n}\in \dd a_n\right)\,\ind_{a_n\leq a_{n+1}\leq a_n+b_M} e^{-2\|b\|_\infty b_M} \\
			&\qquad\qquad\times\,b(s_n,a_{n+1}-a_n)\,g_{s_n,a_{n+1}-a_n}(s_{n+1})\pi(\dd s_{n+1})\,\lambda(\dd a_{n+1})\\
			&\geq \int_{\Sbb\times \R_+} \P^M_{(s_0,0)}\left(Y^M_{\tau_{n}}\in \dd s_n\times\{0\},\tau_{n}\in \dd a_n\right) \,\uls{b}\,\uls{g}\,\pi(\dd s_{n+1})\,\lambda_{\vert[a_n+b_m,a_n+b_M]}(\dd a_{n+1}),
	\end{align*}
	where $\lambda$ denotes the Lebesgue measure on $\R$, and where $\uls{b}:=e^{-2\|b\|_\infty b_M}\inf_{s\in\Sbb,a\in [b_m,b_M]} b(s,a)$ and $\uls{g}:=\inf_{s_0,s\in\Sbb,a\in[b_m,b_M]}g_{s_0,a}(s)$  are positive (by continuity of $b$ and $g$ and by the compactness of $\Sbb$). Using an iterative procedure, we deduce that, for all $n\geq 1$,
	\begin{align*}
	\P^M_{(s_0,0)}\left(Y^M_{\tau_{n}}\in \dd s_{n}\times\{0\},\tau_{n}\in \dd a_{n}\right)
	&\geq \uls{b}^{n}\,\uls{g}^{n}\,\pi(\dd s_{n})\,\big(\lambda_{\vert [b_m,b_M]}\big)^{\otimes n}(\dd a_{n}).
	\end{align*}
	But $\big(\lambda_{\vert [b_m,b_M]}\big)^{\otimes n}$ admits a positive continuous density on $(nb_m,nb_M)$, hence there exist $n\geq 1$, $c>0$ and $d>b_M$ such that
	\[
		\big(\lambda_{\vert [b_m,b_M]}\big)^{\otimes n}(\dd a_{n})\geq c\lambda_{\vert [d-b_M,d+b_m]}(\dd a_n),
	\]
	so that
	\begin{equation}
	\label{eq:eqM1}
		\P^M_{(s_0,0)}\left(Y^M_{\tau_{n}}\in \dd s_{n}\times\{0\},\tau_{n}\in \dd a_{n}\right)
		\geq c\,\uls{b}^{n}\,\uls{g}^{n}\,\pi(\dd s_{n})\,\lambda_{\vert [d-b_M,d+b_m]}(\dd a_n).
	\end{equation}
	
	Now, let $s_0\in\Sbb$ and $a_0\in(0,b_M)$. We have, for all measurable $A\subset \Sbb$ and all $\ell<\ell'\in[0,b_m]$,
	\begin{align*}
	\P^M_{(s_0,a_0)}&\left(Y^M_{d+b_m}\in A\times [\ell,\ell']\right)
	\geq \P^M_{(s_0,a_0)}\left(Y^M_{d+b_m}\in A\times[\ell,\ell'],\,\tau_{n+1}\in [d,d+b_m]\right)\\
	&\quad\quad\geq \P^M_{(s_0,a_0)}\left(Y^M_{\tau_{n+1}}\in A\times\{0\},\,d+b_m-\tau_{n+1}\in [0,b_m ]\cap[\ell,\ell'],\,\tau_{n+2}> \tau_{n+1}+b_m\right)\\
	&\quad\quad\geq \P^M_{(s_0,a_0)}\left(Y^M_{\tau_{n+1}}\in A\times\{0\},\,d+b_m-\tau_{n+1}\in [\ell,\ell']\right)\,e^{-2\|b\|_\infty b_m},
	\end{align*}
	where we used the strong Markov property at time $\tau_{n+1}$ and the fact that the total jump rate of $Y^M$ (including the killing rate) is uniformly bounded by $2\|b\|_\infty$. Using the strong Markov inequality at time $\tau_1$, we deduce that, for all measurable $A\subset \Sbb$ and all $\ell<\ell'\in[0,b_m)$,
	\begin{align*}
	\P^M_{(s_0,a_0)}\left(Y^M_{d+b_m}\in A\times [\ell,\ell']\right)
	&\geq e^{-2\|b\|_\infty b_m}\,\int_{[a_0,b_M]}\lambda(\dd a_1)\,e^{-2\|b\|_\infty b_M}\,b(s_0,a_1)\int_\Sbb \gamma_{s_0,a_1}(\dd s_1)\\
	&\qquad\qquad\times\P^M_{(s_1,0)}\left(Y^M_{\tau_{n}}\in A\times\{0\},\,d+b_m-\tau_{n}-a_1\in  [\ell,\ell']\right)\\
	&\geq  e^{-4\|b\|_\infty b_M}\,\int_{[a_0,b_M]}\lambda(\dd a_1)\,b(s_0,a_1)\int_\Sbb \gamma_{s_0,a_1}(\dd s_1)\,c\,\uls{b}^{n}\,\uls{g}^{n}\,\pi(A)\\
	&\qquad\qquad\times  \lambda_{\vert [d-2b_M,d+b_m]}([d+b_m-a_1+\ell,d+b_m-a_1+\ell'])\\
	&=e^{-4\|b\|_\infty b_M}\,\int_{[a_0,b_M]}\lambda(\dd a_1)\,b(s_0,a_1)\,c\,\uls{b}^{n}\,\uls{g}^{n}\,\pi(A)\,\lambda([l,l']),
	\end{align*}
	where we used~\eqref{eq:eqM1} for the second inequality. Since $\{d+b_m<\tau_\d\}\subset \{\tau_1\in[a_0,b_M]\}$, we also have
	\[
		\P^M_{(s_0,a_0)}\left(d+b_m<\tau_\d\right)\leq \int_{[a_0,b_M]}\lambda(\dd a_1)\,b(s_0,a_1),
	\]
	so that Assumption~A1 holds true with $t_A=d+b_m$, $c_A=e^{-4\|b\|_\infty b_M}\,c\,\uls{b}^{n}\,\uls{g}^{n}\,b_m$, and
	\begin{equation*}
	\beta(\dd s\times \dd a)=\pi(\dd s)\,\lambda_{\vert [0,b_m]}(\dd a)/b_m.
	\end{equation*}

	\medskip\noindent\textit{Step 2. Checking A2.} On the one hand, for all $(s_0,a_0)\in\Sbb\times[0,b_M)$ and all $t\geq b_m+b_M$, we obtain, using the strong Markov property at time $\tau_1$,
	\begin{align}
	\P_{(s_0,a_0)}\left(t<\tau_\d\right)&\leq \overline{g}\int_\Sbb\,\pi(\dd s)\,\P_{(s,0)}\left(t-a_0<\tau_\d\right)\nonumber\\
	&\leq \overline{g}\int_\Sbb\,\pi(\dd s)\,\int_{[0,b_m]}\frac{\dd a}{b_m}\,\P_{(s,a)}\left(t-a_0-a<\tau_\d\right)\nonumber\\
	&\leq \overline{g}\int_\Sbb\,\pi(\dd s)\,\int_{[0,b_m]}\frac{\dd a}{b_m}\,\P_{(s,a)}\left(t-b_M-b_m<\tau_\d\right),\label{eq:eqM2}
	\end{align}
	where we used the fact that $\P_{(s,a)}\left(u<\tau_\d\right)$ decreases with $u$, and where $\overline{g}:=\sup_{(s_0,a,s)\in\Sbb\times[0,b_M]\times \Sbb}$, which is finite by continuity of $g$ and compactness of $\Sbb$. On the other hand, using Step~1 (where we can and do assume without loss of generality that $t_A\geq b_m+b_M$),
	\begin{align*}
	\P_\beta\left(X_{t_A}\in\cdot\right)\geq \P_\beta(t_A<\tau_\d)\,c_A\,\beta,
	\end{align*}
	so that, using the Markov property at time $t_A$,
	\begin{align}
	\P_\beta\left(t<\tau_\d\right)\geq \P_\beta(t_A<\tau_\d)\,c_A\,\P_\beta\left(t-t_A<\tau_\d\right)\geq \P_\beta(t_A<\tau_\d)\,c_A\,\P_\beta\left(t-b_m-b_M<\tau_\d\right).\label{eq:eqM3}
	\end{align}
	Since $\P_\beta(t_A<\tau_\d)>0$, we deduce from~\eqref{eq:eqM2} and~\eqref{eq:eqM3} that Assumption A2 holds true with 
	\[
		c'_A=\frac{\P_\beta(t_A<\tau_\d)c_A}{\overline{g}}.
	\]
	
	\medskip This conludes the proof of Lemma~\ref{lem:lemM1}.

  \end{proof}

  We introduce now the semi-group $R^\infty$ defined by
  \[
  		\delta_{(s_0,a_0)} R_t^\infty := e^{-\|b\|_\infty t}\delta_{(s_0,a_0)}R_t=\E^Z_{a_0,s_0}\left[\exp\left(\int_0^t
  	-\kappa(Z_u)\,du\right)\ind_{Z_t\in
  		\cdot}\right],
  \]
  which is the semi-group of the Markov process $Y^\infty$ defined as follows : it evolves as $Z$ but with an additional killing $\kappa$ (without killing when at time $\tau_M$, contrarily to $Y^M$). Then we have, denoting by $\E^\infty$ the expectation associated to the law of $Y^\infty$, for all $(s_0,a_0)\in\Sbb\times[0,b_M)$ and all bounded measurable function $f:\Sbb\times\R_+\cup\{\d\}\to\R$ such that $f(\d)=0$,
  \begin{align}
  \label{eq:eqM4}
  	\E^\infty_{(s_0,a_0)}\left(f(Y^\infty_t)\ind_{t<\tau_\d}\right)&=\E^M_{(s_0,a_0)}\left(f(Y^M_t)\ind_{t<\tau_\d}\right)+\E^\infty_{(s_0,a_0)}\left(f(Y^\infty_t)\ind_{\tau_M\leq t<\tau_\d}\right).
  \end{align}

  We will need the following technical result to exhibit the limiting behavior of  $R_t^\infty$, when $t\to+\infty$. Our strategy can be used in general when a reducible process satisfies Assumption~A in a given communication class, and can go into another set where the killing rate is strictly larger than the parameter $\lambda_0^M$ associated to the process restricted to the initial communication class.
  
  \begin{lemma}
  	\label{lem:M2}
  	There exists a constant $C>0$ such that, for all $t\geq 0$ and all $(s,a)\in\Sbb\times\R_+$,
  	\begin{align*}
  		\P^\infty_{(s,a)}\left(\tau_M\leq t<\tau_\d\right)\leq C\,e^{-\lambda^M_0 t}.
  	\end{align*}
  \end{lemma}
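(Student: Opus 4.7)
The plan is to exploit the strong Markov property of $Y^\infty$ at $\tau_M$. The key structural observation is that, under the standing convention $b\equiv 0$ on $\Sbb\times[b_M,+\infty)$ (so that $\kappa\equiv\|b\|_\infty$ there), once $Y^\infty$ reaches age $b_M$ it has no more jumps and is killed at the constant rate $\|b\|_\infty$. Setting $\rho(du):=\P^\infty_{(s,a)}(\tau_M\in du,\ \tau_M<\tau_\d)$, the strong Markov property at $\tau_M$ therefore yields the exact identity
\[
\P^\infty_{(s,a)}(\tau_M\leq t<\tau_\d)=\int_0^t e^{-\|b\|_\infty(t-u)}\,\rho(du).
\]

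\textbf{Tail estimate on $\rho$.} Coupling $Y^\infty$ and $Y^M$ via the same underlying PDMP $Z$ and the same killing clock (so that $Y^\infty$ and $Y^M$ agree up to $\min(\tau_M,\tau_\d)$) gives $\tau^M_\d=\min(\tau_M,\tau_\d)$, and hence
\[
\{\tau_M\geq u,\ \tau_M<\tau_\d\}\subset\{u\leq\tau^M_\d\}.
\]
Lemma~\ref{lem:lemM1} (together with boundedness of $\eta$ and the trivial bound $\P^M(\cdot)\leq 1$ for $u\leq t_0$) then provides a constant $C_0>0$, uniform in $(s,a)\in\Sbb\times\R_+$, with
\[
\rho([u,+\infty))\leq\P^M_{(s,a)}(u\leq\tau^M_\d)\leq C_0\,e^{-\lambda^M_0 u},\quad\forall u\geq 0.
\]

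\textbf{Laplace-type estimate.} Writing $e^{\|b\|_\infty u}=1+\|b\|_\infty\int_0^u e^{\|b\|_\infty v}\,dv$ and invoking Fubini,
\[
\int_0^t e^{-\|b\|_\infty(t-u)}\rho(du)=e^{-\|b\|_\infty t}\Big(\rho([0,t])+\|b\|_\infty\int_0^t e^{\|b\|_\infty v}\rho([v,t])\,dv\Big).
\]
Bounding $\rho([v,t])\leq\rho([v,+\infty))$ and applying the tail estimate gives
\[
\int_0^t e^{\|b\|_\infty v}\rho([v,t])\,dv\leq C_0\int_0^t e^{(\|b\|_\infty-\lambda^M_0)v}\,dv\leq \frac{C_0}{\|b\|_\infty-\lambda^M_0}\,e^{(\|b\|_\infty-\lambda^M_0)t},
\]
where the strict gap $\lambda^M_0<\|b\|_\infty$ supplied by Lemma~\ref{lem:lemM1} is precisely what makes this integral converge at the sharp rate. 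Multiplying back by the prefactor $e^{-\|b\|_\infty t}$ yields the announced bound $Ce^{-\lambda^M_0 t}$, uniformly in $(s,a)$.

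\textbf{Main difficulty.} The entire argument hinges on the strict inequality $\lambda^M_0<\|b\|_\infty$: it encodes the fact that once $Y^\infty$ enters $\Sbb\times[b_M,+\infty)$ it is killed strictly faster than the semigroup's leading exponential rate, which is exactly the surplus needed to absorb the convolution in the Laplace estimate without losing the exponent. The naive splitting $\{\tau_M\leq t/2\}\cup\{\tau_M>t/2\}$ only gives $Ce^{-\lambda^M_0 t/2}$ and misses the sharp rate, so the Fubini-based Laplace identity is essential.
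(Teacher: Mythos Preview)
Your proof is correct and follows essentially the same route as the paper: strong Markov at $\tau_M$, the observation that after $\tau_M$ the killing rate is exactly $\|b\|_\infty$, a Fubini/integration-by-parts step turning the convolution into an integral against the survival function, and the uniform tail bound $\P^M_{(s,a)}(u<\tau^M_\d)\leq C_0 e^{-\lambda_0^M u}$ combined with the strict gap $\lambda_0^M<\|b\|_\infty$. The only cosmetic difference is that you carry the exact sub-probability $\rho$ throughout and bound its tail directly, whereas the paper first dominates $\rho$ by the law of $\tau^M_\d$ and then applies the same Fubini manipulation; the resulting estimates are identical.
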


\begin{proof}
	We have, using the Markov property at time $\tau_M$,
	\begin{align*}
		\P^\infty_{(s,a)}\left(\tau_M\leq u<\tau_\d\right)&\leq \int_0^t \P^M_{(s,a)}(\tau_\d\in\dd u)e^{-\|b\|_\infty (t-u)}\\
		&\leq \|b\|_\infty e^{-\|b\|_\infty t}\int_0^t \dd v \,\P^M_{(s,a)}(v<\tau_\d)\,e^{\|b\|_\infty v}\\
		&\leq \|b\|_\infty e^{-\|b\|_\infty t}\int_0^t \dd v \,c\,e^{-\lambda_0^M v}\,e^{\|b\|_\infty v},
	\end{align*}
	where $c>0$ is a constant (see Equation~(2.4) of~\cite{ChampagnatVillemonais2016b}). The computation of the right hand term concludes the proof.
\end{proof}

We denote by $\Upsilon_{\text{exit}}$ the law of $Y^M_{\tau^M_\d-}$ under $\P^M_{\Upsilon_M}$. 
\begin{lemma}
	\label{lem:M3}
	There exist positive constants $C,\lambda''$ such that, for all bounded measurable function $f$ and all 
	\begin{equation*}
	\left|e^{\lambda^M_0 t}\E^\infty_{\Upsilon_M}\left(f(Y^\infty_t)\ind_{\tau_M\leq t<\tau_\d}\right)- \lambda_0\int_0^\infty\dd u \,e^{\lambda^M_0 u} \int_{\Sbb\times\{b_M\}} \Upsilon_{\text{exit}}({\rm d}s,{\rm d}a)\E^\infty_{(s,a)}\left(f(Y^\infty_u)\ind_{u<\tau_\d}\right) \right|
	\leq \|f\|_\infty Ce^{-\lambda'' t}.
	\end{equation*}
\end{lemma}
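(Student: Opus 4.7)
The plan is to express $e^{\lambda^M_0 t}\E^\infty_{\Upsilon_M}(f(Y^\infty_t)\ind_{\tau_M\leq t<\tau_\d})$ as a truncated integral and to identify its tail beyond $t$ with the desired error term. The central step is to compute the joint law of the boundary exit time $\tau_M$ and the exit location of $Y^M$ when started from the quasi-stationary distribution $\Upsilon_M$. I work under the natural coupling between $Y^M$ and $Y^\infty$, in which both processes evolve as $Z$ and are subjected to the same Poisson $\kappa$-killing; they coincide up to $\tau^M_\d = \tau_\d \wedge \tau_M$. On the event $\{\tau_M<\tau_\d\}$ one therefore has $\tau^M_\d=\tau_M$ and $Y^M_{\tau^M_\d-}=Y^\infty_{\tau_M}\in\Sbb\times\{b_M\}$, since the age component grows continuously between jumps and $b$ vanishes beyond $b_M$, so no jump occurs at~$\tau_M$.

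The core identity follows from the quasi-stationarity of $\Upsilon_M$. Applying the strong Markov property of $Y^M$ at time $u$ together with the QSD relation $\mathcal{L}(Y^M_u\mid u<\tau^M_\d)=\Upsilon_M$ yields, for every measurable $A\subset\Sbb\times\R_+$,
\[
\P^M_{\Upsilon_M}\bigl(u<\tau^M_\d,\ Y^M_{\tau^M_\d-}\in A\bigr)=e^{-\lambda^M_0 u}\,\Upsilon_{\text{exit}}(A),
\]
so that the joint law factorises as $\P^M_{\Upsilon_M}(\tau^M_\d\in {\rm d}u,\ Y^M_{\tau^M_\d-}\in {\rm d}z)=\lambda^M_0 e^{-\lambda^M_0 u}\,\Upsilon_{\text{exit}}({\rm d}z)\,{\rm d}u$. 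Transferring through the coupling and restricting to the boundary-exit event gives
\[
\P^\infty_{\Upsilon_M}\bigl(\tau_M\in {\rm d}u,\ Y^\infty_{\tau_M}\in {\rm d}z,\ \tau_M<\tau_\d\bigr)=\lambda^M_0 e^{-\lambda^M_0 u}\,\ind_{z\in\Sbb\times\{b_M\}}\,\Upsilon_{\text{exit}}({\rm d}z)\,{\rm d}u.
\]
Combining this with the strong Markov property of $Y^\infty$ at $\tau_M$ and the substitution $v=t-u$ then gives
\[
e^{\lambda^M_0 t}\E^\infty_{\Upsilon_M}\bigl(f(Y^\infty_t)\ind_{\tau_M\leq t<\tau_\d}\bigr)=\lambda^M_0\int_0^t e^{\lambda^M_0 v}\,{\rm d}v\int_{\Sbb\times\{b_M\}}\Upsilon_{\text{exit}}({\rm d}z)\,\E^\infty_z\bigl(f(Y^\infty_v)\ind_{v<\tau_\d}\bigr),
\]
so the quantity to estimate is the tail $\lambda^M_0\int_t^\infty e^{\lambda^M_0 v}\,{\rm d}v\int_{\Sbb\times\{b_M\}}\Upsilon_{\text{exit}}({\rm d}z)\,\E^\infty_z(f(Y^\infty_v)\ind_{v<\tau_\d})$.

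The tail is easy to control: starting from any $z\in\Sbb\times\{b_M\}$, the process $Y^\infty$ cannot jump because its age only increases beyond $b_M$, where $b\equiv 0$ by assumption, and is therefore killed at the constant rate $\kappa=\|b\|_\infty$. Hence $\P^\infty_z(v<\tau_\d)=e^{-\|b\|_\infty v}$, which bounds $|\E^\infty_z(f(Y^\infty_v)\ind_{v<\tau_\d})|$ by $\|f\|_\infty e^{-\|b\|_\infty v}$. Since $\Upsilon_{\text{exit}}$ has total mass at most one, the tail is bounded above by $\|f\|_\infty\,\lambda^M_0\int_t^\infty e^{-(\|b\|_\infty-\lambda^M_0)v}\,{\rm d}v$, yielding the required exponential decay with $\lambda''=\|b\|_\infty-\lambda^M_0$ and $C=\lambda^M_0/(\|b\|_\infty-\lambda^M_0)$; both constants are positive because Lemma~\ref{lem:lemM1} ensures $\lambda^M_0\in(0,\|b\|_\infty)$.

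The main obstacle I anticipate is the justification of the factorisation $\P^M_{\Upsilon_M}(u<\tau^M_\d,\ Y^M_{\tau^M_\d-}\in A)=e^{-\lambda^M_0 u}\Upsilon_{\text{exit}}(A)$: while it is a direct consequence of quasi-stationarity combined with the strong Markov property, it requires a careful treatment of the left limit $Y^M_{\tau^M_\d-}$ (hence of the c\`adl\`ag nature of the paths of $Y^M$) and a clean separation between the boundary-killing contribution and the $\kappa$-killing contribution when transporting the identity from $\P^M_{\Upsilon_M}$ to $\P^\infty_{\Upsilon_M}$ through the coupling. Once this factorisation is in place, the rest of the argument is a one-line computation as sketched above.
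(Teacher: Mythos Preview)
Your proposal is correct and follows essentially the same route as the paper: both use the quasi-stationarity of $\Upsilon_M$ to obtain the factorisation of the joint law of $(\tau^M_\d,Y^M_{\tau^M_\d-})$ under $\P^M_{\Upsilon_M}$ as an independent exponential time and the exit measure $\Upsilon_{\text{exit}}$, transfer this through the natural coupling of $Y^M$ and $Y^\infty$, apply the strong Markov property at $\tau_M$, and then bound the tail $\int_t^\infty$ using that from $\Sbb\times\{b_M\}$ the process is killed at constant rate $\|b\|_\infty$. Your derivation of the factorisation via $\P^M_{\Upsilon_M}(u<\tau^M_\d,\,Y^M_{\tau^M_\d-}\in A)=e^{-\lambda^M_0 u}\Upsilon_{\text{exit}}(A)$ is in fact slightly more explicit than the paper's, and the constant you obtain in front of the integral, $\lambda^M_0$, is the one the paper's own computation produces (the $\lambda_0$ appearing in the displayed statement is a typographical slip for $\lambda^M_0$).
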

  
  \begin{proof}
  	Under $\P^M_{\Upsilon_M}$, $\tau_\d^M$ is independent from $Y^M_{\tau_\d^M-}$ and is an exponential random variable with parameter $\lambda_0^M$ (these are well known results from the theory of quasi-stationary distributions, see for instance~\cite{ColletMartinezEtAl2013}). Hence, using the strong Markov property at time $\tau_M$ for $Y^\infty$ and using the facts that $Y^\infty_{\tau_M}=Y^\infty_{\tau_M-}$ and that, up to time $\tau_M\wedge \tau_\d$ (excluded), $Y^M$ and $Y^\infty$ have the same law, we obtain 
  	\[
  	\E^\infty_{\Upsilon_M}\left(f(Y^\infty_t)\ind_{\tau_M\leq t<\tau_\d}\right)=\E^M_{\Upsilon_M}\left(\ind_{\tau_\d^M\leq t,\ Y^M_{\tau_\d^M-}\in\Sbb\times\{b_M\}}\E^\infty_{Y^M_{\tau_\d^M-}}\left(f(Y_{t-v})\ind_{t-v<\tau_\d}\right)_{\vert v=\tau_M}\right).
  	\]
  	Then
  	\begin{align*}
  		e^{\lambda_0^M t}\E^\infty_{\Upsilon_M}\left(f(Y^\infty_t)\ind_{\tau_M\leq t<\tau_\d}\right)
  		&=\int_0^t \dd v \,\lambda^M_0 e^{\lambda_0^M(t-v)}\int_{\Sbb\times\{b_M\}} \Upsilon_{\text{exit}}(\dd s,\dd a)\E^\infty_{(s,a)}\left(f(Y^\infty_{t-v})\ind_{t-v<\tau_\d}\right)\\
  		&=\lambda_0 \int_0^t \dd u \,\lambda^M_0 e^{\lambda_0^M u}\int_{\Sbb\times\{b_M\}} \Upsilon_{\text{exit}}(\dd s,\dd a)\E^\infty_{(s,a)}\left(f(Y^\infty_u)\ind_{u<\tau_\d}\right).
  	\end{align*}
  	Now, since $a=b_M$ entails that $\E^\infty_{(s,a)}\left(f(Y^\infty_u)\ind_{u<\tau_\d}\right)\leq \|f\|_\infty e^{-\|b\|_\infty u}$ with $\|b\|_\infty>\lambda_0^M$, one obtains
  	\begin{align*}
  	\int_t^{+\infty} \dd u \,\lambda^M_0 e^{\lambda_0^M u}\int_{\Sbb\times\{b_M\}} \Upsilon_{\text{exit}}(\dd s,\dd a)\E^\infty_{(s,a)}\left(f(Y_u)\ind_{u<\tau_\d}\right)\leq \|f\|_\infty \frac{e^{-(\|b\|_\infty-\lambda_0^M)t}}{\|b\|_\infty-\lambda_0^M},
  	\end{align*}
  	which concludes the proof.  	
  \end{proof}

  The first term on the right hand side of~\eqref{eq:eqM4} multiplied by $e^{\lambda_0^M t}$ converges, according to Lemma~\ref{lem:lemM1}. Let us focus on the second term. We fix $\epsilon>0$ such that $(1-\epsilon)\|b\|_\infty>\lambda_0^M$ and obtain
  \begin{align}
  \E^\infty_{(s_0,a_0)}\left(f(Y_t)\ind_{\tau_M\leq t<\tau_\d}\right)&=\E^\infty_{(s_0,a_0)}\left[\ind_{\epsilon t<\tau_M\leq t}\E^\infty_{Y_{\tau_M}}\left(f(Y_{t-u})\ind_{t-u<\tau_\d}\right)_{\vert u=\tau_M}\right]\label{eq:eqM5}\\
  &\quad\quad\quad+\E^\infty_{(s_0,a_0)}\left(\ind_{\tau_M\leq\, \epsilon t}\E^\infty_{Y_{\tau_M}}\left(f(Y_{t-u})\ind_{t-u<\tau_\d}\right)_{\vert u=\tau_M}\right).\label{eq:eqM6}
  \end{align}
  On the one hand, the term~\eqref{eq:eqM6} is bounded by $\|f\|_\infty e^{-\|b\|_\infty (1-\epsilon) t}$. On the other hand, setting $g(y,u)=\E_{y}\left(f(Y_{t-u})\ind_{t-u<\tau_\d}\right)$, we have
  \begin{align*}
  \E^\infty_{(s_0,a_0)}\left[\ind_{\epsilon t<\tau_M\leq t}g(Y_{\tau_M},\tau_{M})\right]
  &=\E^M_{(s_0,a_0)}\left[\ind_{\epsilon t<\tau_M}\E^\infty_{Y_{\epsilon t}}\left(g(Y_{\tau_M},\tau_{M}+\epsilon t)\ind_{\tau_M\leq (1-\epsilon)t}\right)\right]\\
  &=e^{-\epsilon t\lambda_0^M}\eta(s_0,a_0)\E^\infty_{\Upsilon_M}\left(g(Y_{\tau_M},\tau_{M}+\epsilon t)\ind_{\tau_M\leq (1-\epsilon)t}\right)\\
  &\quad\quad\quad +{\cal O}(e^{-(\lambda^M_0+\lambda_1) \epsilon t})\sup_{(s,a)\in\Sbb\times\R_+}\E^\infty_{(s,a)}\left(g(Y_{\tau_M},\tau_{M}+\epsilon t)\ind_{\tau_M\leq (1-\epsilon)t}\right),
  \end{align*}
  where ${\cal O}(e^{-\lambda_1 t})$ is uniform in $(s_0,a_0)$ by Lemma~\ref{lem:lemM1}. We note that, according to Lemma~\ref{lem:M2},
  \begin{align*}
  \E^\infty_{(s,a)}\left(g(Y_{\tau_M},\tau_{M}+\epsilon t)\ind_{\tau_M\leq (1-\epsilon)t}\right)&\leq \|f\|_\infty \P^\infty_{(s,a)}\left(\tau_M\leq (1-\epsilon)t<\tau_\d\right)\leq C\|f\|_\infty e^{-(1-\epsilon)\lambda^M_0 t}.
  \end{align*}
  As a consequence (using also the bound on~\eqref{eq:eqM6}), there exists a constant $\lambda'>0$ such that
  \begin{align}
  \E^\infty_{(s_0,a_0)}\left(f(Y_t)\ind_{\tau_M\leq t<\tau_\d}\right)&= e^{-\epsilon t\lambda_0^M}\eta(s_0,a_0)\E^\infty_{\Upsilon_M}\left(g(Y_{\tau_M},\tau_{M}+\epsilon t)\ind_{\tau_M\leq (1-\epsilon)t}\right)+{\cal O}(e^{-(\lambda_0^M+\lambda')t}),\\
  &=e^{-\epsilon t\lambda_0^M}\eta(s_0,a_0)\E^M_{\Upsilon_M}\left(g(Y_{\tau_M-},\tau_{M}+\epsilon t)\ind_{\tau_M\leq (1-\epsilon)t}\right)+{\cal O}(e^{-(\lambda_0^M+\lambda')t}),\label{eq:eqM7}
  \end{align}
  uniformly in $(s_0,a_0)\in\Sbb\times[0,b_M)$. But the same procedure as in the proof of Lemma~\ref{lem:M3} shows that
  \begin{align*}
  \E^M_{\Upsilon_M}\left(g(Y_{\tau_M-},\tau_{M}+\epsilon t)\ind_{\tau_M\leq (1-\epsilon)t}\right)&=\int_0^{(1-\epsilon)t} \dd u \lambda_0^M\,e^{-\lambda_0^M u} \int_{\Sbb\times \{b_M\}} \Upsilon_{\text{exit}}(\dd y) g(y,u+\epsilon t)\\
  &=e^{\epsilon \lambda_0^Mt}\int_{\epsilon t}^{t} \dd v \lambda_0^M\,e^{-\lambda_0^M v} \int_{\Sbb\times \{b_M\}} \Upsilon_{\text{exit}}(\dd y) g(y,v).
  \end{align*}
  Since, $ \Upsilon_{\text{exit}}(\dd y)$-almost surely, $g(y,v)$ is bounded by $\|f\|_\infty e^{-\|b\|_\infty (t-v)}$, we have
  \begin{align*}
 & \left|e^{-\epsilon\lambda_0^M t}\E^M_{\Upsilon_M}\left(g(Y_{\tau_M-},\tau_{M}+\epsilon t)\ind_{\tau_M\leq (1-\epsilon)t}\right)- \E^M_{\Upsilon_0}\left(g(Y_{\tau_M},\tau_{M})\ind_{\tau_M\leq t}\right)\right|\\
 &\phantom{e^{-\epsilon\lambda_0^M t}\E^\infty_{\Upsilon_M}\left(g(Y_{\tau_M},\tau_{M}+\epsilon t)\ind_{\tau_M\leq (1-\epsilon)t}\right)- \E^\infty_{\Upsilon_M}}
 \leq \|f\|_\infty \int_{0}^{\epsilon t} \dd v \lambda_0^M\,e^{-\lambda_0^M v} e^{-\|b\|_\infty (t-v)}\\
 &\phantom{e^{-\epsilon\lambda_0^M t}\E^\infty_{\Upsilon_0}\left(g(Y_{\tau_M^c},\tau_{M^c}+\epsilon t)\ind_{\tau_M^c\leq (1-\epsilon)t}\right)- \E^\infty_{\Upsilon_0}}
 =\|f\|_\infty \lambda_0^M e^{-\|b\|_\infty t}\frac{e^{\epsilon(\|b\|_\infty-\lambda_0^M)t}-1}{\|b\|_\infty-\lambda_0^M}\\
 &\phantom{e^{-\epsilon\lambda_0^M t}\E^\infty_{\Upsilon_0}\left(g(Y_{\tau_M^c},\tau_{M^c}+\epsilon t)\ind_{\tau_M^c\leq (1-\epsilon)t}\right)- \E^\infty_{\Upsilon_0}}
 =e^{-\lambda_0^M t}\,{\cal O}(e^{-(1-\epsilon)(\|b\|_\infty-\lambda_0^M)t}).
  \end{align*}
  Using the last inequality, combined with~\eqref{eq:eqM7}, $\eqref{eq:eqM4}$ and Lemma~\ref{lem:M3}, we deduce that
  \begin{align*}
  \left|e^{\lambda_0^M t}\E^\infty_{(s_0,a_0)}\left(f(Y_t)\ind_{t<\tau_\d}\right)-\eta(s_0,a_0)\Upsilon(f)\right|\leq C\,\|f\|_\infty\,e^{-\lambda t},
  \end{align*}
  for some positive constants $C>0$ and $\lambda>0$, where
  \begin{equation}
  \label{eq:Qproc}
	  \Upsilon(f)=\Upsilon_M(f)+\lambda_0\int_0^\infty\dd u \,e^{\lambda^M_0 u} \int_{\Sbb\times\{b_M\}} \Upsilon_{\text{exit}}(\dd s,\dd a)\,\E^\infty_{(s,a)}\left(f(Y_u)\ind_{u<\tau_\d}\right).
  \end{equation}

  The previous analysis was valid for $(s_0,a_0)\in\Sbb\times[0,b_M)$. When $(s_0,a_0)\in\Sbb\times[b_M,+\infty)$, then the killing rate of the process is $\|b\|_\infty$, so that the last inequality holds true (up to a modification of $C$ and $\lambda$) with $\eta(s_0,a_0)=0$.
  
  Taking $\lambda_0=\lambda^M_0-\|b\|_\infty$, this concludes the proof of the first part of Theorem~\ref{thm:malthus1}. 
  
  \medskip
  
  Let us now prove the last assertion of the theorem. Fix $\lambda >-\lambda_0$. From Corollary~\ref{cor:useful}, we now that, for all $s_0\in\Sbb$ and all $t\geq 0$,
  \[
  \nu_{\lambda}(s_0,\Sbb\times[0,t])=e^{-\lambda t}\E_{s_0,0}^Z\left(e^{\lambda Z_t^{(a)}}\,e^{\int_0^t b(Z_u)\,\mathrm du}\right)=e^{(\|b\|_\infty-\lambda) t}\E^\infty_{(s_0,0)}\left(e^{\lambda Y_t^{(a)}}\ind_{t<\tau_\d}\right)
  \]
  where $Z$ is the process described in Section~\ref{sec:poissonian}. Note that $Y_t^{(a)}\leq b_M$ for all $t\leq \tau_M$ and since $Y_t^{(a)}=t-\tau_M$ for all $t\in[\tau_M,\tau_\d)$. Hence, ne the one hand,
  \begin{align*}
  \E^\infty_{(s_0,0)}\left(e^{\lambda Y_t^{(a)}}\ind_{t<\tau_\d\wedge\tau_M}\right)&\leq e^{\lambda b_M}\P^\infty_{s_0,0}\left(t<\tau_\d\wedge\tau_M\right).
  \end{align*}
  But $\|b\|_\infty-\lambda<\|b\|_\infty+\lambda_0=\lambda_0^M$, so that, according to~\eqref{eq:Qproc}, 
  \begin{equation}
  \label{eq:rhs1}
  e^{(\|b\|_\infty-\lambda) t}\E^\infty_{(s_0,0)}\left(e^{\lambda Y_t^{(a)}}\ind_{t<\tau_\d\wedge\tau_M}\right)\xrightarrow[t\to+\infty]{}0.
  \end{equation}
  On the other hand, we have
  \begin{align*}
  \E^\infty_{(s_0,0)}\left(e^{\lambda Y_t^{(a)}}\ind_{\tau_M\leq t<\tau_\d}\right)&= e^{\lambda t}\E^\infty_{(s_0,0)}\left(e^{-\lambda \tau_M}\ind_{\tau_M\leq t<\tau_\d}\right)\\
  &\leq e^{\lambda t}\E^M_{(s_0,0)}\left(e^{-\lambda \tau_M}\ind_{\tau_M=\tau_\d}\right)\\
  &\leq e^{\lambda t}\E^M_{(s_0,0)}\left(e^{-\lambda \tau_\d}\right).
  \end{align*}
  But the killing time $\tau_\d$ under $\P^M_{(s_0,a_0)}$has an exponential queue with parameter $\lambda_0^M$ (see for instance Proposition~2.3 in~\cite{ChampagnatVillemonais2016b}), so that there exists a constant $C>0$ such that
  \begin{align*}
  \E^\infty_{(s_0,0)}\left(e^{\lambda Y_t^{(a)}}\ind_{\tau_M\leq t<\tau_\d}\right)&\leq Ce^{\lambda t}\int_0^t e^{-\lambda u}e^{-\lambda^M_0 u}\,\mathrm du=C\exp(\lambda t)\frac{e^{-(\lambda+\lambda_0^M)t}-1}{\lambda+\lambda_0^M}
  \end{align*}
  Hence, using the fact that $\|b\|_\infty-\lambda_0^M=-\lambda_0<\lambda$,
  \begin{align*}
  \E^\infty_{(s_0,0)}\left(e^{\lambda Y_t^{(a)}}\ind_{\tau_M\leq t<\tau_\d}\right)&\leq C \frac{e^{(\|b\|_\infty-\lambda-\lambda_0^M)t}}{\lambda+\lambda_0^M}\xrightarrow[t\to+\infty]{}0.
  \end{align*}
  Finally, we have proved that, for all $\lambda >-\lambda_0$,
  \begin{align*}
  \nu_\lambda(s_0,\Sbb\times\R_+)=\lim_{t\to+\infty} \nu_\lambda(s_0,\Sbb\times[0,t])=0,
  \end{align*}
  so that $\alpha\leq -\lambda_0$.
  
  Finally, one observes that, according to the already proved first part of Theorem~\ref{thm:malthus1},
  \begin{align*}
  \nu_{-\lambda_0}(s_0,\Sbb\times[0,t])\geq e^{\lambda_0 t} \delta_{s_0,0} R_t \xrightarrow[t\to+\infty]{} \eta(s_0,a_0)\Upsilon(\ind_{\Sbb\times\R_+})>0,
  \end{align*}
  so that $\alpha\geq -\lambda_0$. This concludes the proof of Theorem~\ref{thm:malthus1}.
  

\end{document}